\newtheorem{theorem}{\bf Theorem}[section] \newtheorem{definition}{\bf Definition}[section]
 \newtheorem{remark}{\bf Remark}[section]
  \newtheorem{proposition}{\bf Proposition}[section]
\newtheorem{problem}{\bf Problem}[section]
\newcommand\rmd{\mathrm{d}}
\newcommand\rmx{\mathrm{x}}
\newcommand\rmy{\mathrm{y}}
\newcommand\rmz{\mathrm{z}}
\newcommand\rmZ{\mathrm{Z}}
\newcommand\bbc{\mathbb{C}}
\newcommand\bbr{\mathbb{R}}
\newcommand\bbu{\mathbb{U}}
\newcommand\calC{\mathcal{C}}
\newcommand\calE{\mathcal{E}}
\newcommand\calF{\mathcal{F}}
\newcommand\calM{\mathcal{M}}
\begin{document}

\preprint{APS/123-QED}

\title{Robustness of quantum algorithms against coherent control errors}

\author{Julian Berberich}
 \affiliation{University of Stuttgart, Institute for Systems Theory and Automatic Control, 70569 Stuttgart, Germany}
 \email{julian.berberich@ist.uni-stuttgart.de}
 \author{Daniel Fink}%
 \affiliation{%
 University of Stuttgart, Institute for Computational Physics, 70569 Stuttgart, Germany
 }%
\author{Christian Holm}%
\affiliation{%
University of Stuttgart, Institute for Computational Physics, 70569 Stuttgart, Germany
}%

\date{\today}

\begin{abstract}
    Coherent control errors, for which ideal Hamiltonians are perturbed by unknown multiplicative noise terms, are a major obstacle for reliable quantum computing.
    In this paper, we present a framework for analyzing the robustness of quantum algorithms against coherent control errors using Lipschitz bounds.
    We derive worst-case fidelity bounds which show that the resilience against coherent control errors is mainly influenced by the norms of the Hamiltonians
    generating the individual gates.
    These bounds are explicitly computable even for large circuits, and they can be used to guarantee fault-tolerance via threshold theorems.
    Moreover, we apply our theoretical framework to derive a novel guideline for robust quantum algorithm design and transpilation,
    which amounts to reducing the norms of the Hamiltonians.
    Using the $3$-qubit Quantum Fourier Transform as an example application, we demonstrate that this
    guideline targets robustness more effectively than existing ones based on circuit depth or gate count.
    Furthermore, we apply our framework to study the effect of parameter regularization in variational quantum algorithms.
    The practicality of the theoretical results is demonstrated via implementations in simulation and on a quantum computer.
\end{abstract}

\maketitle


\section{Introduction}\label{sec:intro}
Quantum computing has emerged as a powerful tool to overcome limitations of classical computing and solve problems that were previously intractable.
Much research has been devoted to developing algorithms which yield provable speedups over their classical counterparts, including, e.g., integer factoring~\cite{shor1999polynomial} or search algorithms~\cite{grover1996fast}.
However, a successful practical implementation of these algorithms for relevant problem sizes often requires large circuits with many reliable qubits and gates, which are not available in the current noisy intermediate-scale quantum (NISQ) era~\cite{preskill2018quantum}.
In particular, current quantum circuits are affected by significant amounts of noise, which poses a key challenge for demonstrating any quantum advantage.

Noise occurring on quantum devices can be categorized into decoherent vs.\ coherent errors.
An error is coherent if it can be written as a unitary operator, and is decoherent otherwise.
In this paper, we deal with \emph{coherent control errors}, which are an important type of coherent errors.
Mathematically, a coherent control error can be described as a perturbation
\begin{align}\label{eq:intro_coherent_control_error}
    e^{-i(1+\varepsilon)H}=e^{-iH}e^{-i\varepsilon H}
\end{align}
of an ideal quantum gate $e^{-iH}$, where $H=H^\dagger$ is the Hamiltonian generating the gate.
Further, $\varepsilon\in\bbr$ is an unknown, possibly stochastic and/or time-varying noise term.
On the hardware level, such errors can be caused by imprecise classical control, e.g., due to miscalibration or imperfect actuation.
Coherent control errors have been recognized as a crucial error source on current quantum hardware~\cite{barnes2017quantum,trout2018simulating,arute2019quantum}.
As a result, different approaches have been proposed to study and compensate coherent control errors, e.g., 
using composite pulses~\cite{levitt1986composite,jones2003robust,brown2004arbitrarily,merrill2014progress}, dynamically error-corrected gates~\cite{khodjasteh2009dynamically},
quantum error correction~\cite{bravyi2018correcting,debroy2018stabilizer,edmunds2020dynamically,majumder2020real,liu2022exact}, randomized compiling~\cite{wallman2016noise}, gate set tomography~\cite{nielsen2021gate}, hidden inverses~\cite{zhang2022hidden}, or experimental calibration~\cite{maksymov2021optimal}.
    Although independent coherent control errors are particularly detrimental~\cite{barnes2017quantum}, the majority of the above approaches focuses on systematic errors, i.e., errors where the precise value of $\varepsilon$ in~\eqref{eq:intro_coherent_control_error} is constant in time and among the gates.

\begin{figure}
    \begin{center}
    \includegraphics[width=0.42\textwidth]{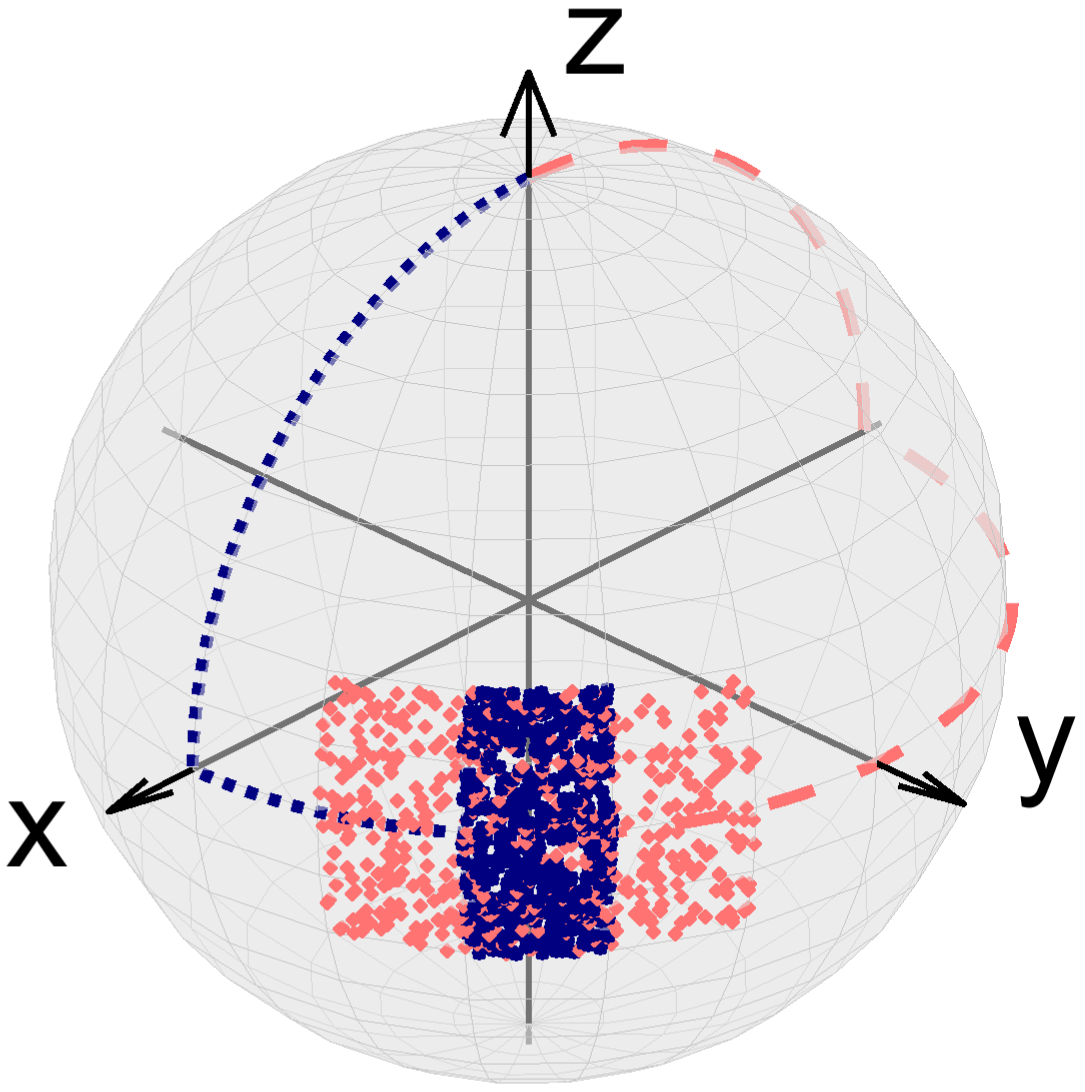}
    \end{center}
    \caption{
        The figure shows the evolution of the quantum state resulting from applying 
        $U_{\mathrm{ideal}}=R_{\rmz}(\frac{\pi}{4})R_{\rmy}(\frac{\pi}{2})$ (blue, dotted) and $U_{\mathrm{ideal}}'=R_{\rmz}(-\frac{3\pi}{4})R_{\rmy}(-\frac{\pi}{2})$ (red, dashed) to $\ket{0}$.
        Additionally, the final states for random realizations of coherent control errors are shown for both circuits (in blue and red, respectively).
        }
    \label{fig:bloch_H}
\end{figure}

In this paper, we develop a framework for analyzing the robustness of quantum algorithms against coherent control errors based on Lipschitz bounds.
We provide novel theoretical insights into the effect of such errors, which can be used to quantitatively assess and improve 
the inherent robustness of quantum algorithms.
Let us motivate our framework with an illustrative example depicted in Figure~\ref{fig:bloch_H}.
Suppose we want to apply the sequence of rotations 
\begin{align}
    U_{\mathrm{ideal}}=R_{\rmz}\Big(\frac{\pi}{4}\Big)R_{\rmy}\Big(\frac{\pi}{2}\Big)
\end{align}
to $\ket0$, where 
$R_{\rmy}(\theta_{\rmy})$ denotes the rotation around the $y$-axis with angle $\theta_{\rmy}$, and similarly for $R_{\rmz}(\theta_{\rmz})$.
Alternatively, the same final state can be reached (modulo a global phase) by applying
\begin{align}
    U_{\mathrm{ideal}}'=R_{\rmz}\Big(-\frac{3\pi}{4}\Big)R_{\rmy}\Big(-\frac{\pi}{2}\Big)
\end{align}
to $\ket0$.
Figure~\ref{fig:bloch_H} shows the evolution of both quantum circuits.
We see that both circuits produce the same final state $\ket{\hat{\psi}}$ if there is no noise.
From a purely computational or algebraic perspective, it is not immediately obvious which of the two is preferable, e.g.,
both use the same gate sets ($y$- and $z$-rotations) and both have the same gate count and circuit depth.

Let us now compare these circuits in terms of robustness.
More precisely, we assume that all involved gates are affected by independent coherent control errors, i.e., rather than $R_{\rmy}(\theta_{\rmy})$ and $R_{\rmz}(\theta_{\rmz})$, 
we can only implement
\begin{align}
    R_{\rmy}(\theta_{\rmy} (1+\varepsilon_\rmy))
    &=
    e^{-i (1+\varepsilon_\rmy) \frac{1}{2} \theta_\rmy Y},
    \\
    R_{\rmz}(\theta_{\rmz} (1+\varepsilon_\rmz))
    &=
    e^{-i (1+\varepsilon_\rmz) \frac{1}{2} \theta_\rmz Z},
\end{align}
where $\varepsilon=(\varepsilon_\rmy,\varepsilon_\rmz)\in\bbr^2$ contains unknown noise terms as in~\eqref{eq:intro_coherent_control_error}.
Figure~\ref{fig:bloch_H} shows the final states when applying each of the two unitaries
\begin{align}
    U_{\mathrm{noisy}}(\varepsilon)=R_{\rmz}\Big(\frac{\pi}{4}(1+\varepsilon_\rmz)\Big)R_{\rmy}\Big(\frac{\pi}{2}(1+\varepsilon_\rmy)\Big)
\end{align}
and 
\begin{align}
    U_{\mathrm{noisy}}'(\varepsilon')=R_{\rmz}\Big(-\frac{3\pi}{4}(1+\varepsilon_\rmz')\Big)R_{\rmy}\Big(-\frac{\pi}{2}(1+\varepsilon_\rmy')\Big)
\end{align}
to $\ket0$ for $500$ different values of $\varepsilon,\varepsilon'\in\bbr^2$, which are randomly sampled from a uniform distribution over $[-0.2,0.2]\times[-0.2,0.2]$.

While the circuits $U_{\mathrm{ideal}}$ and $U_{\mathrm{ideal}}'$ are identical in the absence of noise, the coherent control errors cause a substantial difference.
In particular, $U_{\mathrm{noisy}}'$ is affected by a significantly larger over- or under-rotation around the $z$-axis in comparison to $U_{\mathrm{noisy}}$.
We can also make a quantitative comparison by computing the fidelity of each perturbed final state $\ket{\tilde{\psi}}$ w.r.t.\ 
the ideal one $\ket{\hat{\psi}}$, i.e., $|\braket{\tilde{\psi}|\hat{\psi}}|$, and by determining the minimum fidelity over all $500$ noise realizations.
The resulting value for $U_{\mathrm{noisy}}$ is $0.985$, whereas $U_{\mathrm{noisy}}'$ yields a minimum fidelity of $0.965$.
This shows that the loss of fidelity of $U_{\mathrm{noisy}}'$ due to the worst-case realization of the coherent control error is more than twice 
as large as that of $U_{\mathrm{noisy}}$, i.e., the latter circuit
is significantly more robust against such errors.
The results presented in this paper allow to explain this observation and much more general scenarios by precisely quantifying the robustness
of quantum algorithms depending on the involved gates.

\subsection*{Contribution}
In this paper, we develop a framework for robustness analysis of quantum algorithms against coherent control errors.
More precisely, we use Lipschitz bounds to derive worst-case fidelity bounds against coherent control errors, which
depend only on the components of the given circuit and are explicitly computable.
In particular, we show that the resilience of a circuit against coherent control errors mainly depends on the norms 
of the Hamiltonians as well as on the coupling between sequentially applied gates.
The presented results are applicable under rather general conditions:
We allow for independent noise terms affecting the gates, for arbitrary unitaries defining the gates,
and we do not require any assumptions on the nature of the noise, e.g., being sufficiently small or drawn from a
specific probability distribution.

We apply our theoretical framework to the following problems.
First, we derive explicit worst-case bounds for coherent control errors, which can be used in quantum error correction (QEC)
threshold theorems.
Moreover, we propose a novel guideline for robust quantum algorithm design and transpilation, which amounts to reducing the
norms of the involved Hamiltonians.
The norms of the Hamiltonians can provide a more accurate robustness measure than existing ones based on circuit depth or gate count 
as commonly used in the literature on circuit optimization and transpilation~\cite{maslov2008quantum,arabzadeh2010rule,nam2018automated,amy2019controlled,lee2019hybrid,duncan2020graph,foesel2021quantum,nagarajan2021quantum}.
Thus, the proposed guideline leads to quantum circuits which are inherently more robust against coherent control errors and, thereby, 
more easily implementable in the near term.
We illustrate this principle by studying the robustness of the $3$-qubit Quantum Fourier Transform (QFT) when transpiled into
different elementary gate sets.
Finally, we apply our results to variational quantum algorithms (VQAs), where we show that 
parameter regularization improves the robustness against coherent control errors.
Our theoretical findings are confirmed by simulations and with an implementation on the \textit{ibm\_nairobi} quantum computer~\cite{ibm_hardware}.

\subsection*{Outline}
In Section~\ref{sec:fidelity}, we derive worst-case fidelity bounds for quantum circuits affected by coherent control errors based on Lipschitz bounds.
These theoretical results are connected to threshold theorems for fault-tolerant quantum computing (Section~\ref{sec:fault_tolerance}),
and they are used to derive a novel guideline for robust quantum algorithm design and transpilation (Section~\ref{sec:guidelines}).
In Section~\ref{sec:robustness_elementary}, we use our framework to analyze the robustness of different elementary gate
decompositions of the $3$-qubit QFT.
Section~\ref{sec:validation} contains a validation of our results on a quantum computer and
Section~\ref{sec:vqa} addresses parameter regularization in VQAs.
Finally, we conclude the paper in Section~\ref{sec:conclusion}.
The source code for all of the performed experiments is publicly accessible on GitHub~\cite{code_repository}.

\subsection*{Notation}
We write $\lVert x\rVert_p$ for the $p$-norm of a vector $x\in\bbc^N$ and $\lVert A\rVert_2$ for the induced $2$-norm (i.e., the maximum singular value) of a matrix $A\in\bbc^{N\times N}$.
Further, the maximum eigenvalue of a Hermitian matrix $A=A^\dagger$ is denoted by $\lambda_{\max}(A)$.
We denote the $N$-dimensional identity matrix by $I_N$ and the $N$-dimensional unitary group by $\bbu^N$.

\section{Worst-case bounds for coherent control errors}\label{sec:fidelity}

\begin{figure}
    \centering 
    \begin{subfigure}{0.75\columnwidth}
        \centering
        \includegraphics[width=0.85\textwidth]{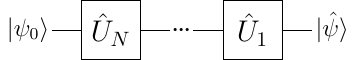}
        \caption{Ideal quantum circuit}
    \end{subfigure}
\vskip5pt
    \begin{subfigure}{1.045\columnwidth}
        \centering
        \includegraphics[width=0.85\textwidth]{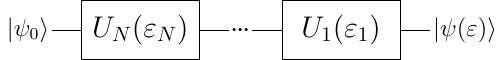}
        \caption{Noisy quantum circuit}
    \end{subfigure}
    \caption{Ideal and noisy quantum circuits.}	\label{fig:circuit_ideal_noisy}
\end{figure}

Consider the following \emph{ideal} quantum circuit
\begin{align}\label{eq:qc}
    \ket{\hat{\psi}}=\hat{U}_1\cdots \hat{U}_N\ket{\psi_0},
\end{align}
consisting of $N$ unitary operators $\hat{U}_i\in\bbu^{2^n}$ acting on the initial state $\ket{\psi_0}$, cf.\
Figure~\ref{fig:circuit_ideal_noisy} (a).
These unitaries can also be written as $\hat{U}_i=e^{-iH}$, where the Hamiltonian $H_i=H_i^\dagger$ generates the gate $\hat{U}_i$.
When the above circuit is implemented on a real quantum computer, it will experience errors.
In this paper, we focus on \emph{coherent control errors} for which the perturbed circuit
\begin{align}\label{eq:qc_noise}
    \ket{\psi(\varepsilon)}=U_1(\varepsilon_1)\cdots U_N(\varepsilon_N)\ket{\psi_0}
\end{align}
shown in Figure~\ref{fig:circuit_ideal_noisy} (b) is executed, where the noisy gates $U_i(\varepsilon_i)$ take the form
\begin{align}\label{eq:qc_noise_Ui}
U_i(\varepsilon_i)=e^{-i(1+\varepsilon_i)H_i}.
\end{align}
Here, $\varepsilon_i$ are the components of a real-valued noise vector $\varepsilon\in\bbr^N$.
The noisy quantum circuit~\eqref{eq:qc_noise} is related to the noise-free one~\eqref{eq:qc} by setting the noise to zero, i.e., $U_i(0)=\hat{U}_i$ for $i=1,\dots N$ such that $\ket{\psi(0)}=\ket{\hat{\psi}}$.

Throughout the paper, we assume that the size of the noise is bounded by some scalar $\bar{\varepsilon}>0$, i.e.,
$|\varepsilon_i|\leq\bar{\varepsilon}$ for $i=1,\dots,N$ or, equivalently, $\lVert \varepsilon\rVert_{\infty}\leq\bar{\varepsilon}$.
Beyond the existence of such a bound, we make no further assumptions.
In particular, we do \emph{not} assume that
the errors are identical, i.e., that $\varepsilon_i=\varepsilon_j$ for $i\neq j$, or that the bound $\bar{\varepsilon}$ is sufficiently small (both are common assumptions, e.g., in the
literature on composite pulses~\cite{levitt1986composite,jones2003robust,merrill2014progress}).
We also do not assume that the coherent control errors follow certain statistics as, e.g., assumed by~\cite{skolik2022robustness} in the context
of quantum reinforcement learning.

In this section, we derive worst-case fidelity bounds for the noisy quantum circuit~\eqref{eq:qc_noise} based on Lipschitz bounds.
To this end, we first show in Section~\ref{subsec:lipschitz_fidelity} that a Lipschitz bound of the map $\varepsilon\mapsto\ket{\psi(\varepsilon)}$ implies a worst-case fidelity bound.
Next, we derive Lipschitz bounds based on the norms of the Hamiltonians (Section~\ref{subsec:lipschitz_simple})
and based on the coupling between subsequent gates (Section~\ref{subsec:lipschitz_pair}).

\subsection{Worst-case fidelity bounds via Lipschitz bounds}\label{subsec:lipschitz_fidelity}
Consider the following problem of bounding the worst-case fidelity of~\eqref{eq:qc_noise}~\cite{nielsen2011quantum}.
\begin{problem}\label{prob:worst_case_fidelity}
Given $\bar{\varepsilon}\geq0$, find $M(\bar{\varepsilon})\geq0$ such that, for any $\varepsilon\in\bbr^N$ with $\lVert \varepsilon\rVert_{\infty}\leq\bar{\varepsilon}$ and any initial state $\ket{\psi_0}$, it holds that
\begin{align}\label{eq:prob_worst_case_fidelity}
    |\braket{\psi(\varepsilon)|\hat{\psi}}|\geq 1-M(\bar{\varepsilon}).
\end{align}
\end{problem}
The bound~\eqref{eq:prob_worst_case_fidelity} quantifies the loss of fidelity due to the noise.
Note that it is a \emph{worst-case bound} both w.r.t.\ the coherent control error (it holds for any $\varepsilon\in\bbr^N$ with $\lVert \varepsilon\rVert_{\infty}\leq\bar{\varepsilon}$)
and w.r.t.\ the initial state (it holds for any initial state $\ket{\psi_0}$).
In the following, we show that~\eqref{eq:prob_worst_case_fidelity} holds when choosing $M$ based on a Lipschitz bound of $\ket{\psi(\varepsilon)}$.
\begin{definition}\label{def:lipschitz}
    A scalar $L>0$ is a \emph{Lipschitz bound} of $\ket{\psi}$ if
    \begin{align}\label{eq:def_lipschitz}
    \lVert\ket{\psi(\varepsilon)}-\ket{\psi(\varepsilon')}\rVert_2\leq L\lVert \varepsilon-\varepsilon'\rVert_{\infty}
    \end{align}
    for all $\varepsilon,\varepsilon'\in\bbr^N$~\footnote{We use an $\infty$-norm for the noise rather than a $2$-norm in order to obtain tighter bounds 
    in our main technical results.}.
\end{definition}
By definition, a Lipschitz bound $L$ bounds the worst-case amplification of the perturbation $\varepsilon$ on the resulting quantum state $\ket{\psi(\varepsilon)}$.
Lipschitz bounds are closely connected to the \emph{diamond distance}~\cite{kitaev1997quantum}.
This connection and its implications are discussed in more detail in Section~\ref{sec:fault_tolerance}.

The following result shows that a Lipschitz bound can be used to derive a worst-case fidelity bound as in Problem~\ref{prob:worst_case_fidelity}.
\begin{theorem}\label{thm:worst_case_fidelity}
    Suppose $L$ is a Lipschitz bound of $\ket{\psi}$.
    Then, for any $\varepsilon\in\bbr^N$ satisfying $\lVert \varepsilon\rVert_{\infty}\leq\bar{\varepsilon}$ for some 
    $\bar{\varepsilon}\geq0$ and any initial state $\ket{\psi_0}$, it holds that
    \begin{align}\label{eq:thm_worst_case_fidelity}
        |\braket{\psi(\varepsilon)|\hat{\psi}}|\geq 1-\frac{L^2\bar{\varepsilon}^2}{2}.
    \end{align}
\end{theorem}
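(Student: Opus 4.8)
The plan is to use the fact that the ideal state is just the zero-noise instance of the perturbed state, $\ket{\hat\psi}=\ket{\psi(0)}$, so that the Lipschitz bound immediately controls the distance $\lVert\ket{\psi(\varepsilon)}-\ket{\hat\psi}\rVert_2$, and then to convert a distance bound into a fidelity bound via the elementary polarization identity for unit vectors.

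First I would apply Definition~\ref{def:lipschitz} with $\varepsilon'=0$. Since $U_i(0)=\hat U_i$ gives $\ket{\psi(0)}=\ket{\hat\psi}$, inequality~\eqref{eq:def_lipschitz} together with the hypothesis $\lVert\varepsilon\rVert_\infty\leq\bar\varepsilon$ yields
\begin{align*}
\lVert\ket{\psi(\varepsilon)}-\ket{\hat\psi}\rVert_2\leq L\lVert\varepsilon\rVert_\infty\leq L\bar\varepsilon.
\end{align*}
Second, I would note that $\ket{\psi(\varepsilon)}$ and $\ket{\hat\psi}$ are both normalized, being images of the unit vector $\ket{\psi_0}$ under unitaries. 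For any two unit vectors one has $\lVert u-v\rVert_2^2=2-2\,\mathrm{Re}\braket{u|v}\geq 2-2|\braket{u|v}|$, hence $|\braket{u|v}|\geq 1-\tfrac12\lVert u-v\rVert_2^2$. Specializing to $u=\ket{\psi(\varepsilon)}$, $v=\ket{\hat\psi}$ and inserting the distance bound above gives
\begin{align*}
|\braket{\psi(\varepsilon)|\hat\psi}|\geq 1-\tfrac12\lVert\ket{\psi(\varepsilon)}-\ket{\hat\psi}\rVert_2^2\geq 1-\frac{L^2\bar\varepsilon^2}{2},
\end{align*}
which is exactly~\eqref{eq:thm_worst_case_fidelity}.

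I do not expect a genuine obstacle here; the argument is a two-line substitution. The only points worth stating carefully are the normalization of the states (so that expanding $\lVert u-v\rVert_2^2$ produces no terms beyond the overlap) and the step $\mathrm{Re}\braket{u|v}\leq|\braket{u|v}|$, which is what makes the bound valid for the modulus of the overlap, i.e.\ the relevant notion of fidelity, rather than merely for its real part.
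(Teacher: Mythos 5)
Your argument is correct and is essentially identical to the paper's own proof: both use $\ket{\hat\psi}=\ket{\psi(0)}$ to invoke the Lipschitz bound with $\varepsilon'=0$, and both convert the resulting distance bound into a fidelity bound via $\lVert u-v\rVert_2^2=2-2\,\mathrm{Re}\braket{u|v}\geq 2-2|\braket{u|v}|$ for unit vectors. The only difference is presentational (the paper writes the overlap-to-distance chain first and the Lipschitz step second), so there is nothing to add.
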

\begin{proof}
Note that
\begin{align}\label{eq:thm_worst_case_fidelity_proof1}
    &2-2\,|\braket{\psi(\varepsilon)|\hat{\psi}}|\\\nonumber
    \leq&2-2\,\mathrm{Re}\{\braket{\psi(\varepsilon)|\hat{\psi}}\}\\\nonumber
    =&\braket{\psi(\varepsilon)|\psi(\varepsilon)}-\braket{\psi(\varepsilon)|\hat{\psi}}-\braket{\hat{\psi}|\psi(\varepsilon)}+\braket{\hat{\psi}|\hat{\psi}}\\\nonumber
    =&\lVert\ket{\psi(\varepsilon)}-\ket{\hat{\psi}}\rVert_2^2.
\end{align}
Since $\ket{\hat{\psi}}=\ket{\psi(0)}$, we can apply~\eqref{eq:def_lipschitz} with $\varepsilon'=0$ to infer
\begin{align}
    &2-2|\braket{\psi(\varepsilon)|\hat{\psi}}|\leq L^2\lVert \varepsilon\rVert_{\infty}^2\leq L^2\bar{\varepsilon}^2,
\end{align}
where we use $\lVert \varepsilon\rVert_{\infty}\leq \bar{\varepsilon}$ for the last step.
Rearranging the terms leads to~\eqref{eq:thm_worst_case_fidelity}.
\end{proof}

Thus, if a Lipschitz bound $L$ is known for the map $\varepsilon\mapsto\ket{\psi(\varepsilon)}$, then the worst-case loss of fidelity due to noise bounded by $\bar{\varepsilon}>0$
can be directly computed as $\frac{L^2\bar{\varepsilon}^2}{2}$.
The converse direction is equally important:
If we want to guarantee a worst-case fidelity of no less than $\calF$, i.e., $|\braket{\psi(\varepsilon)|\hat{\psi}}|
\geq\calF$ for any $\varepsilon\in\bbr^N$ 
with $\lVert \varepsilon\rVert_{\infty}\leq\bar{\varepsilon}$ and any initial state $\ket{\psi_0}$, then the error bound needs to be
sufficiently small in the sense that
\begin{align}
    \bar{\varepsilon}\leq\frac{\sqrt{2}}{L}\sqrt{1-\calF}.
\end{align}

\subsection{Norm-based Lipschitz bounds}\label{subsec:lipschitz_simple}
We have seen in the previous section that a Lipschitz bound of $\varepsilon\mapsto\ket{\psi(\varepsilon)}$ allows us to compute worst-case fidelity bounds.
With this motivation, we now turn to the problem of deriving Lipschitz bounds.
The following result shows that this can be done based on the norms of the Hamiltonians generating the gates.
\begin{theorem}\label{thm:lipschitz_unitary}
    For any $\varepsilon,\varepsilon'\in\bbr^N$ and any initial state $\ket{\psi_0}$, we have
\begin{align}\label{eq:thm_lipschitz_unitary}
\lVert \ket{\psi(\varepsilon)}-\ket{\psi(\varepsilon')}\rVert_2&\leq \sum_{i=1}^N\lVert H_i\rVert_2|\varepsilon_i-\varepsilon_i'|.
\end{align}
In particular, $\sum_{i=1}^N\lVert H_i\rVert_2$ is a Lipschitz bound of $\ket{\psi}$ and, for any $\varepsilon\in\bbr^N$ with $\lVert \varepsilon\rVert_{\infty}\leq\bar{\varepsilon}$ and any initial state $\ket{\psi_0}$, it holds that
\begin{align}\label{eq:thm_lipschitz_unitary_fidelity}
    |\braket{\psi(\varepsilon)|\hat{\psi}}|\geq 1-\left(\sum_{i=1}^N\lVert H_i\rVert_2\right)^2\frac{\bar{\varepsilon}^2}{2}.
\end{align}
\end{theorem}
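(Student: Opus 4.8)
The plan is to prove the telescoping bound~\eqref{eq:thm_lipschitz_unitary} first, since the other two conclusions follow immediately: once $\sum_{i=1}^N\lVert H_i\rVert_2$ is known to be a Lipschitz bound, the fidelity estimate~\eqref{eq:thm_lipschitz_unitary_fidelity} is just Theorem~\ref{thm:worst_case_fidelity} applied with $L=\sum_{i=1}^N\lVert H_i\rVert_2$.

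The core estimate is a hybrid (telescoping) argument. Write $\ket{\psi(\varepsilon)}-\ket{\psi(\varepsilon')}$ as a sum of $N$ terms, where the $k$-th term replaces $\varepsilon_k'$ by $\varepsilon_k$ while keeping coordinates $1,\dots,k-1$ at the $\varepsilon$-value and coordinates $k+1,\dots,N$ at the $\varepsilon'$-value. Concretely, set $\ket{\phi_k}=U_1(\varepsilon_1)\cdots U_{k}(\varepsilon_{k})U_{k+1}(\varepsilon_{k+1}')\cdots U_N(\varepsilon_N')\ket{\psi_0}$, so that $\ket{\phi_0}=\ket{\psi(\varepsilon')}$, $\ket{\phi_N}=\ket{\psi(\varepsilon)}$, and by the triangle inequality $\lVert\ket{\psi(\varepsilon)}-\ket{\psi(\varepsilon')}\rVert_2\leq\sum_{k=1}^N\lVert\ket{\phi_k}-\ket{\phi_{k-1}}\rVert_2$. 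Consecutive terms $\ket{\phi_k}$ and $\ket{\phi_{k-1}}$ share the common left factor $U_1(\varepsilon_1)\cdots U_{k-1}(\varepsilon_{k-1})$ and the common right factor $U_{k+1}(\varepsilon_{k+1}')\cdots U_N(\varepsilon_N')\ket{\psi_0}$; since the left factor is unitary it preserves the $2$-norm, so $\lVert\ket{\phi_k}-\ket{\phi_{k-1}}\rVert_2=\lVert\big(U_k(\varepsilon_k)-U_k(\varepsilon_k')\big)\ket{\chi_k}\rVert_2$ for the unit vector $\ket{\chi_k}=U_{k+1}(\varepsilon_{k+1}')\cdots U_N(\varepsilon_N')\ket{\psi_0}$, hence is at most $\lVert U_k(\varepsilon_k)-U_k(\varepsilon_k')\rVert_2$.

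It then remains to bound $\lVert e^{-i(1+\varepsilon_k)H_k}-e^{-i(1+\varepsilon_k')H_k}\rVert_2$ by $\lVert H_k\rVert_2|\varepsilon_k-\varepsilon_k'|$. Because both matrices are functions of the single Hermitian matrix $H_k$, I would diagonalize $H_k=\sum_j\lambda_j\ket{v_j}\bra{v_j}$ and observe that the difference is diagonal in the same eigenbasis with entries $e^{-i(1+\varepsilon_k)\lambda_j}-e^{-i(1+\varepsilon_k')\lambda_j}$, so the spectral norm equals $\max_j|e^{-i(1+\varepsilon_k)\lambda_j}-e^{-i(1+\varepsilon_k')\lambda_j}|$. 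Using $|e^{ia}-e^{ib}|\leq|a-b|$ (e.g.\ from $|e^{ia}-e^{ib}|=|\int_b^a e^{it}\,dt|\leq|a-b|$) gives $|e^{-i(1+\varepsilon_k)\lambda_j}-e^{-i(1+\varepsilon_k')\lambda_j}|\leq|\varepsilon_k-\varepsilon_k'|\,|\lambda_j|\leq|\varepsilon_k-\varepsilon_k'|\,\lVert H_k\rVert_2$, since $\lVert H_k\rVert_2=\max_j|\lambda_j|$ for Hermitian $H_k$. Summing over $k$ yields~\eqref{eq:thm_lipschitz_unitary}.

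For the "Lipschitz bound" claim, note $\sum_i\lVert H_i\rVert_2|\varepsilon_i-\varepsilon_i'|\leq\big(\sum_i\lVert H_i\rVert_2\big)\max_i|\varepsilon_i-\varepsilon_i'|=\big(\sum_i\lVert H_i\rVert_2\big)\lVert\varepsilon-\varepsilon'\rVert_{\infty}$, which is exactly~\eqref{eq:def_lipschitz} with $L=\sum_i\lVert H_i\rVert_2$; then~\eqref{eq:thm_lipschitz_unitary_fidelity} follows from Theorem~\ref{thm:worst_case_fidelity}. I do not anticipate a genuine obstacle here; the only point requiring mild care is the simultaneous diagonalizability used in the single-gate estimate — this works precisely because the two exponentials share the generator $H_k$, so no Baker--Campbell--Hausdorff or operator-ordering subtleties arise, unlike in bounds comparing products of exponentials of \emph{different} Hamiltonians. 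The telescoping step is where one must be careful to keep the left factors unitary (to drop them for free) and to track that the residual right-hand state $\ket{\chi_k}$ has unit norm.
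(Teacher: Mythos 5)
Your proof is correct and follows essentially the same route as the paper's: a telescoping/hybrid decomposition reducing the problem to the single-gate estimate $\lVert U_k(\varepsilon_k)-U_k(\varepsilon_k')\rVert_2\leq\lVert H_k\rVert_2|\varepsilon_k-\varepsilon_k'|$, followed by Theorem~\ref{thm:worst_case_fidelity}. The only (cosmetic) difference is that you obtain the single-gate bound by diagonalizing $H_k$ and using $|e^{ia}-e^{ib}|\leq|a-b|$, whereas the paper bounds the derivative $\frac{\rmd}{\rmd\varepsilon_k}U_k(\varepsilon_k)=-iH_ke^{-i(1+\varepsilon_k)H_k}$ uniformly by $\lVert H_k\rVert_2$; both are valid.
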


The proof of Theorem~\ref{thm:lipschitz_unitary} can be found in Appendix~\ref{app:proof_lipschitz}.
According to the result, the fidelity loss induced by coherent control errors can be bounded using two ingredients:
the norms of the involved Hamiltonians $\lVert H_i\rVert_2$ and the noise bound $\bar{\varepsilon}$.
In particular, the robustness of the circuit improves if the norms $\lVert H_i\rVert_2$ decrease, and vice versa.

Intuitively, this can be explained by considering single-qubit rotations.
Suppose the circuit consists only of one $Z$-rotation with angle $\theta$, i.e., 
$\hat{U}=R_{\rmz}(\theta)=e^{-i\frac{\theta}{2} Z}$.
When affected by a coherent control error, the perturbed circuit takes the form
$U(\varepsilon)=e^{-i(1+\varepsilon)\frac{\theta}{2}Z}$ with $\varepsilon\in\bbr$.
Applying~\eqref{eq:thm_lipschitz_unitary_fidelity}, we directly infer
\begin{align}\label{eq:lipschitz_unitary_fidelity_example}
    |\braket{\psi(\varepsilon)|\hat{\psi}}|\geq 1-\frac{\theta^2 \varepsilon^2}{8}.
\end{align}
This formula illustrates the intuitive fact that,
due to the multiplicative influence of the noise, the sensitivity of the fidelity w.r.t.\ the noise 
decreases for smaller values of $\theta$, and it
vanishes for $\theta=0$.

Let us apply Theorem~\ref{thm:lipschitz_unitary} to the example given in Section~\ref{sec:intro} in order to compare the robustness
of the two circuits $\hat{U}=R_{\rmz}(\frac{\pi}{4})R_{\rmy}(\frac{\pi}{2})$ and 
$\hat{U}'=R_{\rmz}(-\frac{3\pi}{4})R_{\rmy}(-\frac{\pi}{2})$.
By computing the norms of all involved Hamiltonians and using Theorem~\ref{thm:lipschitz_unitary}, we infer that the Lipschitz bound of
$U$ w.r.t.\ the coherent control error is given by $\frac{3\pi}{8}$, whereas that of $U'$ is given by $\frac{5\pi}{8}$.
Hence, the worst-case fidelity bound~\eqref{eq:thm_lipschitz_unitary_fidelity} of $U'$ is smaller than that of $U$, which explains why
the latter is more robust w.r.t.\ coherent control errors, as observed in Section~\ref{sec:intro}.
Also note that the difference of the Lipschitz bound is due to the different magnitude of the $z$-rotations,
which is why the higher sensitivity of $U'$ only manifests itself as an over-/under-rotation around the $z$-axis, cf.\ Figure~\ref{fig:bloch_H}.

Theorem~\ref{thm:lipschitz_unitary}
is not only applicable to single-qubit operations but also to multi-qubit gates.
In particular, the bounds~\eqref{eq:thm_lipschitz_unitary} and~\eqref{eq:thm_lipschitz_unitary_fidelity} can be easily computed even for large, possibly high-dimensional systems, since 
the Hamiltonians $H_i$ are typically sparse, cf.\ Appendix~\ref{app:scalability} for details.

\begin{remark}\label{rk:global_phase}
    Since global phases are unobservable in the context of quantum computing, the Hamiltonian $H$ which generates a given unitary matrix $\hat{U}=e^{-iH}$ is not unique.
    More precisely, there is a family of Hamiltonians $H_{\varphi}=H+\varphi I$ with parameter $\varphi\in\bbr$ whose elements all produce the same unitary, modulo the global phase $\varphi$:
    \begin{align}
        e^{-iH_{\varphi}}=e^{-i\varphi}e^{-iH}.
    \end{align}
    As expected, the effect of coherent control errors on the circuit is independent of the particular choice of $H_{\varphi}$ since
    \begin{align}
        e^{-iH_{\varphi}(1+\varepsilon)}=e^{-i\varphi(1+\varepsilon)}e^{-iH(1+\varepsilon)}.
    \end{align}
    In particular, the influence of the error $\varepsilon$ on the circuit with Hamiltonian $H_{\varphi}$ is identical to the influence on $H$, modulo the global phase $\varphi(1+\varepsilon)$.
    Nevertheless, it is important to note that the theoretical bounds derived in this paper (Theorem~\ref{thm:lipschitz_unitary} as well as Theorem~\ref{thm:lipschitz_pair}
    in the following section) depend on the specific choice of $\varphi$.
    Clearly, $\lVert H\rVert_2\neq \lVert H+\varphi I\rVert_2$ for any $\varphi\neq0$.
   Theorem~\ref{thm:lipschitz_unitary} as well as all further statements in this paper remain true when the Hamiltonians $H_i$ are replaced by $H_i+\varphi_i I$ for any $\varphi_i\in\bbr$.
    In particular, the tightest Lipschitz bound of $\ket{\psi}$ (and, hence, the tightest worst-case fidelity bound) is achieved when replacing $H_i$ by $H_i+\varphi_i^*$, where
    $\varphi_i^*$ minimizes
    \begin{align}
        \min_{\varphi_i\in\bbr}\lVert H_i+\varphi_iI\rVert_2.
    \end{align}
\end{remark}

\begin{remark}\label{rk:related_bounds}
    A Lipschitz bound of the form $\sqrt{N}\max_i\lVert H_i\rVert_2$ was obtained in~\cite{sweke2020stochastic} for the purpose of convergence analysis of VQAs.
    In the present paper, we work with the bound derived in Theorem~\ref{thm:lipschitz_unitary} since it explicitly involves all elements of the quantum circuit which, as we will see later in the paper, 
    provides a flexible basis for robustness analysis.
    Further, we note that the results in~\cite{lidar2008distance} imply a norm-based Lipschitz bound analogous to~\eqref{eq:thm_lipschitz_unitary} for the case of one gate.
\end{remark}

An interesting question that arises is how tight the bounds in Theorem~\ref{thm:lipschitz_unitary} are,
i.e., how close they are to the smallest possible Lipschitz bound - the Lipschitz constant of $\ket{\psi}$ - and to the true worst-case fidelity.
In Appendix~\ref{app:tightness}, we show that the worst-case fidelity bound~\eqref{eq:thm_lipschitz_unitary_fidelity} is, for $N=1$, a good approximation
of the true worst-case fidelity, and that it is more accurate than alternative bounds.

It is important to note that the Lipschitz bound~\eqref{eq:thm_lipschitz_unitary} itself is, in general, not tight, e.g.,
since the coupling between sequentially applied gates is not taken into account.
In the next subsection, we derive structured Lipschitz bounds which involve pairs of sequentially applied Hamiltonians in order to
derive possibly tighter bounds.

\subsection{Pair-wise Lipschitz bounds}\label{subsec:lipschitz_pair}
The following result states Lipschitz bounds of $\ket{\psi}$ which take the coupling between subsequent gates into account.

\begin{theorem}\label{thm:lipschitz_pair}
    \begin{subequations}\label{eq:thm_lipschitz_pair}
    The following are Lipschitz bounds of $\ket{\psi}$:
    \begin{itemize}
        \item If the number of unitaries $N$ is even:
        \begin{align}\label{eq:thm_lipschitz_pair_even}
            &\sqrt{2}\sum_{i=1}^{\frac{N}{2}}\lVert \begin{pmatrix} H_{2i-1}&H_{2i}\end{pmatrix}\rVert_2.
        \end{align}

        \item If the number of unitaries $N$ is odd:
        \begin{align}
            \label{eq:thm_lipschitz_pair_odd}
            &\lVert H_N\rVert_2+\sqrt{2}\sum_{i=1}^{\frac{N-1}{2}}\lVert \begin{pmatrix}H_{2i-1}&H_{2i}\end{pmatrix}\rVert_2.
        \end{align}
    \end{itemize}

\end{subequations}
\end{theorem}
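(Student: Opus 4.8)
Fix $\varepsilon,\varepsilon'\in\bbr^N$ and write $\delta_j:=\varepsilon_j-\varepsilon_j'$. The plan is to run the telescoping estimate behind Theorem~\ref{thm:lipschitz_unitary}, but grouping the circuit into consecutive \emph{pairs} of gates. Suppose first that $N$ is even and set $V_k(a,b):=e^{-i(1+a)H_{2k-1}}e^{-i(1+b)H_{2k}}\in\bbu^{2^n}$ for $k=1,\dots,N/2$, so that $\ket{\psi(\varepsilon)}=V_1(\varepsilon_1,\varepsilon_2)\cdots V_{N/2}(\varepsilon_{N-1},\varepsilon_N)\ket{\psi_0}$. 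Applying the identity $A_1\cdots A_m-B_1\cdots B_m=\sum_{k=1}^m A_1\cdots A_{k-1}(A_k-B_k)B_{k+1}\cdots B_m$ with $A_k=V_k(\varepsilon_{2k-1},\varepsilon_{2k})$ and $B_k=V_k(\varepsilon_{2k-1}',\varepsilon_{2k}')$, and using that each $V_k$ is unitary so the flanking products preserve $\lVert\cdot\rVert_2$, one obtains
\[
\lVert\ket{\psi(\varepsilon)}-\ket{\psi(\varepsilon')}\rVert_2\le\sum_{k=1}^{N/2}\big\lVert V_k(\varepsilon_{2k-1},\varepsilon_{2k})-V_k(\varepsilon_{2k-1}',\varepsilon_{2k}')\big\rVert_2.
\]
It therefore suffices to bound a single-pair perturbation by $\sqrt2\,\lVert\begin{bmatrix}H_{2k-1}&H_{2k}\end{bmatrix}\rVert_2\lVert\varepsilon-\varepsilon'\rVert_\infty$.

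For the single-pair bound I would use the fundamental theorem of calculus along the segment $a(t):=\varepsilon_{2k-1}'+t\,\delta_{2k-1}$, $b(t):=\varepsilon_{2k}'+t\,\delta_{2k}$, $t\in[0,1]$. The key observation is that $H_{2k-1}$ commutes with $e^{-i(1+a)H_{2k-1}}$ and $H_{2k}$ with $e^{-i(1+b)H_{2k}}$, so that both partial derivatives of $V_k$ take the \emph{same} sandwiched form $e^{-i(1+a)H_{2k-1}}(\,\cdot\,)e^{-i(1+b)H_{2k}}$ and can be combined into
\[
\frac{d}{dt}V_k(a(t),b(t))=-i\,e^{-i(1+a(t))H_{2k-1}}\big(\delta_{2k-1}H_{2k-1}+\delta_{2k}H_{2k}\big)e^{-i(1+b(t))H_{2k}}.
\]
Since the outer factors are unitary, $\lVert\tfrac{d}{dt}V_k\rVert_2=\lVert\delta_{2k-1}H_{2k-1}+\delta_{2k}H_{2k}\rVert_2$, which no longer depends on $t$. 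Writing $\delta_{2k-1}H_{2k-1}+\delta_{2k}H_{2k}=\begin{bmatrix}H_{2k-1}&H_{2k}\end{bmatrix}w$ where $w$ stacks $\delta_{2k-1}I_{2^n}$ on $\delta_{2k}I_{2^n}$, one has $\lVert w\rVert_2=\sqrt{\delta_{2k-1}^2+\delta_{2k}^2}$, so submultiplicativity gives $\lVert\delta_{2k-1}H_{2k-1}+\delta_{2k}H_{2k}\rVert_2\le\lVert\begin{bmatrix}H_{2k-1}&H_{2k}\end{bmatrix}\rVert_2\sqrt{\delta_{2k-1}^2+\delta_{2k}^2}$, and $\sqrt{\delta_{2k-1}^2+\delta_{2k}^2}\le\sqrt2\max(|\delta_{2k-1}|,|\delta_{2k}|)\le\sqrt2\,\lVert\varepsilon-\varepsilon'\rVert_\infty$. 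Integrating the derivative over $t\in[0,1]$ yields the single-pair estimate, and summing over $k$ establishes~\eqref{eq:thm_lipschitz_pair_even}. For $N$ odd, group the first $N-1$ gates into $(N-1)/2$ pairs and keep $U_N(\varepsilon_N)$ as a singleton; the same telescoping produces one extra summand $\lVert U_N(\varepsilon_N)-U_N(\varepsilon_N')\rVert_2$, which the one-variable analogue of the above (the derivative along the path is $-i\,\delta_N H_N$ times a unitary) bounds by $\lVert H_N\rVert_2|\delta_N|\le\lVert H_N\rVert_2\lVert\varepsilon-\varepsilon'\rVert_\infty$; this is exactly the single-gate estimate already used in the proof of Theorem~\ref{thm:lipschitz_unitary} (Appendix~\ref{app:proof_lipschitz}). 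Combining with the pair sum gives~\eqref{eq:thm_lipschitz_pair_odd}.

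The only genuinely nonroutine step is the un-sandwiching identity for $\frac{d}{dt}V_k$: it is what lets the two partial derivatives be collected into a single Hamiltonian $\delta_{2k-1}H_{2k-1}+\delta_{2k}H_{2k}$ sitting between two unitaries, making its norm $t$-independent and directly comparable to $\lVert\begin{bmatrix}H_{2k-1}&H_{2k}\end{bmatrix}\rVert_2$. This relies on each Hamiltonian commuting with the unitary it generates, and precisely this ``slot matching'' already fails for triples of consecutive gates, which is why the pair is the natural unit here; everything after that is submultiplicativity of the induced $2$-norm together with the elementary inequality $x^2+y^2\le2\max(|x|,|y|)^2$ responsible for the factor $\sqrt2$.
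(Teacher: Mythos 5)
Your proof is correct and follows essentially the same route as the paper's: the paper likewise uses the commutation of $H_j$ with $U_j(\varepsilon_j)$ to collect the two partial derivatives of a consecutive pair into the sandwiched block form $U_{2i-1}(\varepsilon_{2i-1})\begin{bmatrix}H_{2i-1}&H_{2i}\end{bmatrix}(\,\cdot\,)$, bounds it by $\lVert\begin{bmatrix}H_{2i-1}&H_{2i}\end{bmatrix}\rVert_2$, and then combines the pairs via the telescoping argument of Theorem~\ref{thm:lipschitz_unitary} together with $\lVert a\rVert_2\leq\sqrt{2}\lVert a\rVert_\infty$ on $\bbr^2$. Your version merely makes the telescoping over pair-blocks and the fundamental-theorem-of-calculus step explicit where the paper writes ``similar to the proof of Theorem~\ref{thm:lipschitz_unitary}.''
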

The proof of Theorem~\ref{thm:lipschitz_pair} is provided in Appendix~\ref{app:proof_pair}.
First, we note that the bound can be easily modified to a sum of arbitrary $1$- and $2$-pairs of $\lVert H_i\rVert_2$.
For example, in case that $N=4$, possible Lipschitz bounds of $\ket{\psi}$ are given by
\begin{align*}
&\sqrt{2}\,\lVert \begin{pmatrix}H_1&H_2\end{pmatrix}\rVert_2+\lVert H_3\rVert_2+\lVert H_4\rVert_2,\\
&\lVert H_1\rVert_2+\sqrt{2}\,\lVert\begin{pmatrix}H_2&H_3\end{pmatrix}\rVert_2+\lVert H_4\rVert_2,\>\text{etc.}
\end{align*}
All of these are valid Lipschitz bounds and, depending on the specific form of the Hamiltonians, they may all be different.

Further, combining Theorem~\ref{thm:lipschitz_pair} with Theorem~\ref{thm:worst_case_fidelity} (for the case of even $N$),
we obtain the worst-case fidelity bound
\begin{align}\label{eq:lipschitz_pair_fidelity}
    |\braket{\psi(\varepsilon)|\hat{\psi}}|\geq1-\left(\sum_{i=1}^N
    \lVert\begin{pmatrix}H_{2i-1}&H_{2i}
    \end{pmatrix}\rVert_2\right)^2
    \bar{\varepsilon}^2
\end{align}
for any $\varepsilon\in\bbr^N$ with $\lVert \varepsilon\rVert_{\infty}\leq\bar{\varepsilon}$.
This shows that the coupling between any two sequentially applied Hamiltonians influences the robustness against coherent control errors.

To study this point in more detail, let us analyze under what conditions the pair-wise bounds in Theorem~\ref{thm:lipschitz_pair} are beneficial.
For any two matrices $H_1$, $H_2$, it holds that
\begin{align}\label{eq:lipschitz_structured_triangle}
\lVert\begin{pmatrix}H_1&H_2\end{pmatrix}\rVert_2
=&\sqrt{\lambda_{\max}(H_1^\dagger H_1+H_2^\dagger H_2)}\\\nonumber
\leq&\sqrt{\lambda_{\max}(H_1^\dagger H_1)+\lambda_{\max}(H_2^\dagger H_2)}\\\nonumber
\leq&\sqrt{\lambda_{\max}(H_1^\dagger H_1)}+\sqrt{\lambda_{\max}(H_2^\dagger H_2)}\\\nonumber
=&\lVert H_1\rVert_2+\lVert H_2\rVert_2.
\end{align}
Inequality~\eqref{eq:lipschitz_structured_triangle} is strict in almost all cases.
To be precise, there are two main factors that contribute to a possible gap.
Let us start with the second inequality in~\eqref{eq:lipschitz_structured_triangle}, which uses $\sqrt{a+b}\leq\sqrt{a}+\sqrt{b}$ for arbitrary $a,b\geq0$.
This inequality is strict, i.e., $\sqrt{a+b}<\sqrt{a}+\sqrt{b}$, whenever $a,b>0$.
The best possible improvement, i.e., the maximum gap $\sqrt{2}$, occurs for $a=b$.
On the other hand, the first inequality in~\eqref{eq:lipschitz_structured_triangle} uses
\begin{align}\label{eq:noisy_Hamiltonians_interaction_gap}
&\lambda_{\max}(H_1^\dagger H_1+H_2^\dagger H_2)\\\nonumber
\leq&\lambda_{\max}(H_1^\dagger H_1)+\lambda_{\max}(H_2^\dagger H_2).
\end{align}
This inequality is strict if the eigenvectors corresponding to the maximum eigenvalues of $H_1^\dagger H_1$ and $H_2^\dagger H_2$ are linearly independent.
These findings reveal the following general principle for robustness:
the derived Lipschitz bound on the sequential application of two unitaries $U_{i+1}(\varepsilon_{i+1})U_i(\varepsilon_i)$ is not only influenced by the 
individual norms of the Hamiltonians $\lVert H_{i}\rVert_2+\lVert H_{i+1}\rVert_2$
but also by the structured, pair-wise norm bounds $\lVert\begin{pmatrix}H_{i}&H_{i+1}\end{pmatrix}\rVert_2$.
In particular, the fidelity loss is reduced if the singular vectors corresponding to the maximum singular values of $H_{i+1}$ and $H_i$ are linearly independent.
It is important to emphasize that this is the case \emph{even if the noise terms $\varepsilon_i$ entering the different unitaries are independent.}
In the remainder of the paper, we mainly focus on the norm-based bounds in Theorem~\ref{thm:lipschitz_unitary},
but drawing analogous conclusions using the pair-wise bounds in Theorem~\ref{thm:lipschitz_pair} is
an interesting issue for future research.

\section{Worst-case bounds for fault-tolerant quantum computing}\label{sec:fault_tolerance}

The threshold theorems provide fundamental bounds on the required accuracy for each gate to achieve fault-tolerant quantum computation~\cite{kitaev1997quantum,aharonov1997fault,aharonov2008fault,nielsen2011quantum}.
The required accuracy is typically quantified via the \emph{diamond distance}~\cite{kitaev1997quantum} of an error $\calE$ w.r.t.\ the identity operator $I$, which is defined as 
\begin{align}\label{eq:diamond_definition}
    \lVert\calE-I\rVert_{\diamond}=\sup_{\psi}\lVert (\calE\otimes I_d-I_{d^2})(\psi)\rVert_1.    
\end{align} 
Here, $d$ denotes the dimension of the underlying system on which $\calE$ acts.
In order to guarantee fault-tolerant quantum computation, it is required that $\lVert\calE-I\rVert_{\diamond}\leq\eta$ for some sufficiently small $\eta>0$~\cite{aharonov2008fault}.

Suppose now that, as previously in the paper, $\calE$ takes the form of a coherent control error, i.e., $\calE(\varepsilon)=e^{-i\varepsilon H}$ with $\varepsilon\in\bbr$ and some Hamiltonian $H=H^\dagger$.
It is clear that any Lipschitz bound $L$ for the map $\varepsilon\mapsto\calE(\varepsilon)$ also yields an upper bound on the diamond distance in~\eqref{eq:diamond_definition}, i.e.,
\begin{align}\label{eq:diamond_bound}
    &\lVert\calE(\varepsilon)-I\rVert_{\diamond}\\\nonumber 
    =&\sup_{\psi}\lVert (\calE(\varepsilon)\otimes I_d-I_{d^2})(\psi)\rVert_1\\\nonumber
=&\sup_{\psi}\lVert ((\calE(\varepsilon)-\calE(0))\otimes I_d)(\psi)\rVert_1\\\nonumber
\leq&\lVert\calE(\varepsilon)-\calE(0)\rVert_1\leq\sqrt{d}\lVert\calE(\varepsilon)-\calE(0)\rVert_2\\\nonumber
\leq&\sqrt{d}L\lVert \varepsilon\rVert_{\infty}.
\end{align}
Thus, the results from Section~\ref{sec:fidelity} provide explicitly computable and insightful bounds on the diamond distance for coherent control errors.

Let us discuss the connection of this insight to existing results.
First, we note that determining the diamond distance w.r.t.\ the identity experimentally can be challenging~\cite{blume2017demonstration}.
Therefore, different approaches to computing or estimating the diamond distance have been proposed.
For example, one can first determine the (average) fidelity, see, e.g.,~\cite{nielsen2002simple,thomas2011robustness}, and then use it to bound the diamond 
distance for coherent errors~\cite{beigi2011simplified,wallman2014randomized,wallman2015error,wallman2015estimating,kueng2016comparing,sanders2016bounding}.
Further diamond distance bounds in the presence of coherent errors when using randomized compiling are derived in~\cite{wallman2016noise}.
In contrast to these works, our framework provides tailored bounds for coherent control errors, which can be explicitly computed, depend directly
on the circuit elements, and hold under rather general assumptions, e.g., without requirements on statistics or the size of the error as well as for independent errors.
Finally, we note that Theorem~\ref{thm:lipschitz_unitary} also applies to scenarios with 
many gates, for which our bounds remain easily computable (cf.\ Appendix~\ref{app:scalability}).

\section{A new guideline for robust quantum algorithm design and transpilation}\label{sec:guidelines}

In this section, we discuss how the theoretical results in Section~\ref{sec:fidelity} can be used to  
derive a systematic and flexible guideline for designing and transpiling quantum algorithms such that their resilience against coherent control errors is improved.

Since noise poses a major obstacle to the practical demonstration of a quantum advantage, 
QEC~\cite{shor1995scheme,calderbank1996good,steane1996error}  and quantum error mitigation (QEM, see~\cite{cai2022quantum} for a recent survey) have been developed to handle noise.
In QEC, the ideal, noise-free circuit is expanded via additional circuit elements which can detect and/or compensate possible errors.
Although strong theoretical statements can be made on the success of such QEC approaches (cf.\ the discussion in Section~\ref{sec:fault_tolerance}), QEC can produce a significant overhead in terms of additional qubits and gates such that its practical application on current NISQ devices is challenging.
In QEM, instead of correcting for unavoidable errors on the circuit level, one instead leaves the quantum circuit unchanged and rather reduces the effect of noise via classical post-processing.
While QEM can bring practical advantages in the current NISQ era, there are also a number of open challenges
and, in particular, fundamental limitations~\cite{takagi2022fundamental,quek2022exponentially}.

Since QEC and QEM currently cannot completely eliminate errors, there is need for additional circuit optimization during the design or transpilation of quantum algorithms.
Important metrics for characterizing robustness of quantum circuits are the depth, the gate count, or
the number of entangling gates of the circuit.
With this motivation, a variety of approaches has been developed for optimizing circuits, e.g., during the transpilation step, in order to 
reduce these quantities~\cite{maslov2008quantum,arabzadeh2010rule,nam2018automated,amy2019controlled,lee2019hybrid,duncan2020graph,foesel2021quantum,nagarajan2021quantum}.
Such methods can significantly reduce the complexity of quantum circuits and, therefore, 
improve robustness against noise and enable an easier implementation on NISQ hardware.

Theorem~\ref{thm:lipschitz_unitary} can be used to derive a novel, quantitative guideline for improving the robustness of quantum algorithms during the design or transpilation step. 
To be precise, inequality~\eqref{eq:thm_lipschitz_unitary_fidelity} shows that the worst-case loss of fidelity due to coherent control errors depends on the norms of the Hamiltonians $H_i$ defining the quantum circuit.
Thus, smaller norms $\lVert H_i\rVert_2$ imply better robustness against such errors.
This means that, whenever an algorithm designer or a transpilation procedure has the choice between different Hamiltonians, choosing a circuit with smaller norms will improve robustness.

The circuit in Section~\ref{sec:intro} provides a concrete example demonstrating how this guideline can be used for algorithm design.
We have seen in simulation that $\hat{U}=R_{\rmz}(\frac{\pi}{4})R_{\rmy}(\frac{\pi}{2})$ is more robust against coherent control errors in comparison to
$\hat{U}'=R_{\rmz}(-\frac{3\pi}{4})R_{\rmy}(-\frac{\pi}{2})$.
This observation can be explained by our framework by noting that the derived Lipschitz bound of the noisy version $U(\varepsilon)$ of $\hat{U}$ against coherent control errors
is larger than that of $U'(\varepsilon')$ (cf.\ the discussion below Theorem~\ref{thm:lipschitz_unitary}).
Therefore, we can conclude that it is always beneficial (for robustness against coherent control errors) to implement $\hat{U}$ instead of $\hat{U}'$.

It is important to note that this conclusion cannot be drawn from existing circuit optimization schemes mentioned above:
both the circuit depth and the gate count for the two circuits are identical.
Yet, the loss of fidelity of $U(\varepsilon)$ is more than twice as large as that of $U'(\varepsilon)$.
This shows that Lipschitz bounds provide a more accurate, quantitative metric for assessing robustness of quantum algorithms in the presence of coherent control errors.
Certainly, for the simple example in Section~\ref{sec:intro}, one can arrive at the same conclusion without using Theorem~\ref{thm:lipschitz_unitary}, e.g., 
via the simulations shown in Figure~\ref{fig:bloch_H} or simply from intuition.
However, for more intricate scenarios, it may not be immediately obvious which gate sequence provides the most robust solution.
The proposed framework, on the other hand, can still be used to assess robustness.
Especially for large algorithms, it is important that the circuit is as robust as possible since, otherwise, the errors may quickly accumulate
and prevent a reliable execution.

To give another example, suppose we want to implement a non-trivial quantum algorithm and we have different realizations of this algorithm or different universal gate sets at our disposal.
We can then find the most robust implementation by simply computing a Lipschitz bound for each configuration as the sum of the norms of all $N$ involved Hamiltonians.
This can be done with low computational cost even for very large circuits, compare Appendix~\ref{app:scalability}.
After computing the Lipschitz bound for each circuit, the one with the smallest result has the best resilience against coherent control errors according to Theorem~\ref{thm:lipschitz_unitary}.
In Section~\ref{sec:robustness_elementary}, we follow this idea
by studying the robustness of the $3$-qubit QFT when transpiled into different elementary gate sets.

\section{Robustness of the Quantum Fourier Transform for different elementary gate sets}\label{sec:robustness_elementary}
In the following, we illustrate the practical potential of the proposed theoretical framework by solving the following problem:
we study the robustness of different elementary gate set implementations of the $3$-qubit QFT.
To this end, we consider the following gate sets:
\begin{itemize}
    \item Gate set A:\\
    $\sqrt{X}$, $X$, $R_{\rmz}$, $CX$\\
    (used by IBM~\cite{ibm_gateset})

    \item Gate set B:\\
    $R_{\rmx}(\pm \frac{\pi}{2})$, $R_{\rmx}(\pm \pi)$, $R_{\rmz}$, $CZ$\\
    (used by Rigetti~\cite{rigetti_gateset})

    \item Gate set C:\\
    $U_1$, $U_2$, $U_3$, $CX$\\
    (formerly used by IBM~\cite{ibm_gateset_old})

    \item Gate set D:\\
    $\sqrt{i\text{SWAP}}$, FSIM, PhasedXZ, $X$, $Y$, $Z$\\
    (used by Google~\cite{google_gateset})

    \item Gate set E:\\
     $R_{\rmx \rmy}(\frac{\pi}{2})$, $R_{\rmx \rmy}(\pi)$, $R_{\rmz}$, $U_{\rmz\rmz}$\\
     (used by Honeywell~\cite{honeywell_gateset})
\end{itemize}
In the following, we transpile the quantum circuit for the $3$-qubit QFT for each of these five gate sets and study the robustness w.r.t.\ coherent control errors
both theoretically and in simulation.
The transpilation is carried out using the Berkeley Quantum Synthesis Toolkit (BQSKit)~\cite{bqskit_software}, for which we use the maximum optimization level.
The textbook circuit for the $3$-qubit QFT as well as the five transpiled circuits according to the gate sets A--E are provided in Appendix~\ref{app:qft}.

\begin{figure}
    \begin{center}
    \includegraphics[width=\columnwidth]{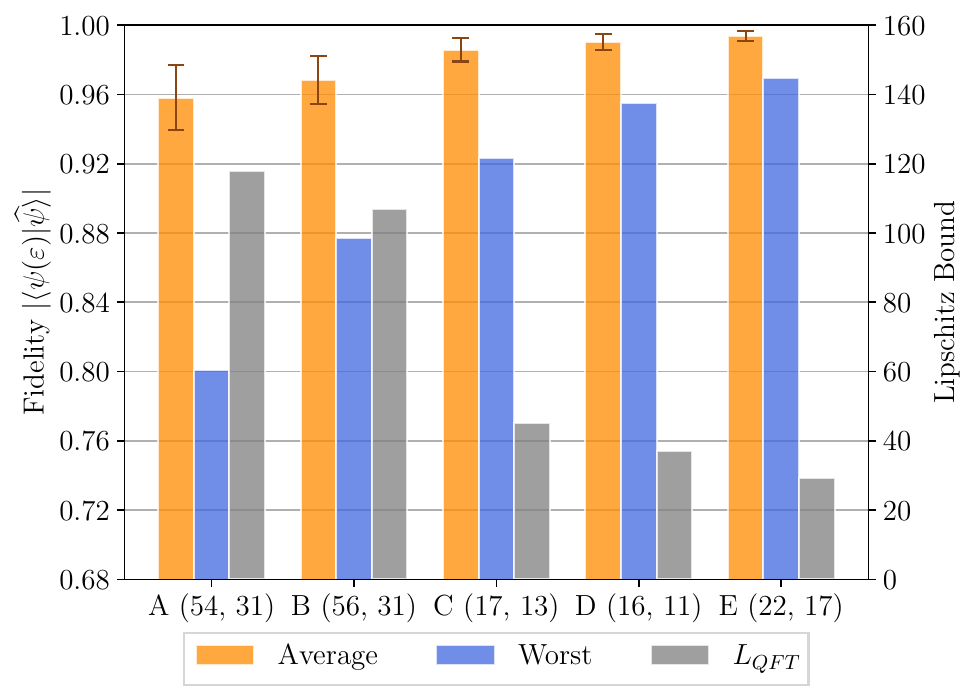}
    \end{center}
    \caption{
        Fidelities, Lipschitz bounds, gate counts, and circuit depths for five elementary gate set implementations of the $3$-qubit QFT, which are
        affected by coherent control errors.
        For each gate set, the figure shows the average fidelity including the standard deviation (left bar, orange),
        the worst-case fidelity (middle bar, blue),
        the Lipschitz bound (right bar, gray),
        and gate count (left number in parentheses) as well as the circuit depth (right number in parentheses).
        }
    \label{fig:experiment_gatesets}
\end{figure}

We start by computing the Lipschitz bound $L_{QFT}$ for each transpiled circuit based on Theorem~\ref{thm:lipschitz_unitary}.
The results are displayed in Figure~\ref{fig:experiment_gatesets}.
Note that gate set E leads to the smallest Lipschitz bound, followed by gate set D, etc.
Thus, using the worst-case fidelity bound~\eqref{eq:thm_lipschitz_unitary_fidelity}, 
we expect that the circuit corresponding to gate set E has the best robustness w.r.t.\ coherent control errors.
In the following, we validate this theoretical analysis in simulation.

Specifically, we simulate each circuit (QFT with elementary gate set A/B/C/D/E) in the presence of coherent control errors, i.e.,
we replace each noise-free gate $e^{-iH_i}$ appearing in each circuit by $e^{-i(1+\varepsilon_i)H_i}$ 
with independent noise terms $\varepsilon_i$, which are uniformly sampled
from $[-0.05,+0.05]$.
Furthermore, to estimate the worst-case fidelities, we draw the initial states $\ket{\psi_0}$ uniformly at random from the Haar measure.
Figure~\ref{fig:experiment_gatesets} displays the worst-case fidelity, average fidelity,
and standard deviation based on $40,000$ runs of each algorithm with varying noise realizations.
Note that there is a strong correlation between a small Lipschitz bound and a resulting large fidelity (average and worst), which
confirms the above analysis.
In particular, gate set E leads to the largest worst-case fidelity, whereas gate set A, which has the largest Lipschitz bound, is
the least robust.

It is also interesting to compare the gate counts and the circuit depths of the above circuits, which are commonly employed robustness quantifiers in existing circuit optimization methods (cf.\ the discussion in Section~\ref{sec:guidelines}).
These are shown in Figure~\ref{fig:experiment_gatesets} as well.
Note that, for the above circuits, having fewer gates or lower depths does not necessarily correspond to better robustness.
For example, the loss of the worst-case fidelity for gate set C is more than two times as large as that for gate set E,
even though the implementation based on gate set E has almost $30\%$ more gates.
A similar observation can be made when comparing the circuit depths.
This contradicts the common philosophy that less gates imply better robustness, and it shows that
Lipschitz bounds are, indeed, more accurate quantifiers of the robustness against coherent control errors.

Finally, for the above circuits, the worst-case fidelity bound~\eqref{eq:thm_lipschitz_unitary_fidelity} is conservative and even negative for four out of five circuits. 
Nevertheless, as Figure~\ref{fig:experiment_gatesets} shows, the Lipschitz bounds still provide a simple and quantitative measure for comparing their robustness. 

To summarize, our theoretical results provide novel insights into robust quantum algorithm design and transpilation, allowing to compare the robustness
of different circuits and, thereby, allowing for an informed choice of the most robust implementation.
This is especially important for large quantum algorithms:
if a quantum algorithm should produce any form of quantum advantage, then it cannot be efficiently simulated classically and, therefore,
it is not tractable to make a statistical analysis as in Figure~\ref{fig:experiment_gatesets}.
On the other hand, computing the Lipschitz bound is very simple and easily scalable, cf. Appendix~\ref{app:scalability}.

It should be emphasized that our main contribution is \emph{not} a robustness comparison of the elementary gate sets themselves.
In particular, we do not claim that gate set E is generally preferable over the others.
For example, it is entirely possible that
there exists a circuit based on gate set D which implements the QFT and has better robustness than the one based on gate set E shown in Appendix~\ref{app:qft}.
Our main contribution is \emph{the mere possibility} to make a theoretical analysis as above for a given set of quantum circuits.
Without resorting to simulations or experiments, we can use Lipschitz bounds to give a priori guarantees on the robustness against coherent control errors.
Such insights provide a promising tool for informing algorithm designers or circuit optimization methods in order to improve robustness directly at the design or transpilation stage.

In order to show that these theoretical findings can indeed be turned into a measurable robustness advantage in practice, 
we compare the implementation of two circuits with different Lipschitz bounds on a real quantum computer in Section~\ref{sec:validation}.

\section{Validation on a quantum computer}\label{sec:validation}
In the following, we validate our theoretical findings in an implementation on the \textit{ibm\_nairobi} quantum computer~\cite{ibm_hardware}.
To this end, we consider the two circuits $\hat{U}_A$ and $\hat{U}_B$ depicted in Figures~\ref{fig:circuit_ideal_noisy}
(a) and (b), respectively. 
It can be readily verified that both circuits produce the same output state $\ket{\hat{\psi}}$.

\begin{figure}
    \centering 
    \begin{subfigure}{\columnwidth}
        \centering
        \includegraphics[width=0.7\textwidth]{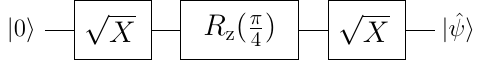}
        \caption{Ideal circuit $\hat{U}_A$}
    \end{subfigure}
\vskip5pt
\begin{subfigure}{\columnwidth}
    \centering
    \includegraphics[width=0.85\textwidth]{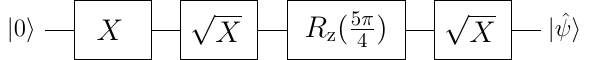}
    \caption{Ideal circuit $\hat{U}_B$}
\end{subfigure}
\vskip5pt
\begin{subfigure}{\columnwidth}
    \centering
    \includegraphics[width=0.85\textwidth]{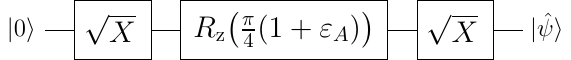}
    \caption{Noisy circuit $U_A(\varepsilon_A)$}
\end{subfigure}
\vskip5pt
    \begin{subfigure}{\columnwidth}
        \centering
        \includegraphics[width=\textwidth]{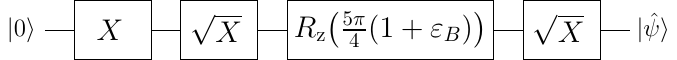}
        \caption{Noisy circuit $U_B(\varepsilon_B)$}
    \end{subfigure}
    \caption{Ideal circuits $U_A$ and $U_B$, and their noisy versions
    $U_A(\varepsilon_A)$ and $U_B(\varepsilon_B)$.)}	\label{fig:circuit_ideal_noisy}
\end{figure}

In the following, we use Theorem~\ref{thm:lipschitz_unitary} to show that $U_A$ is more robust against coherent control errors than $U_B$,
and we confirm this statement via
an implementation on a quantum computer.
To be precise, we consider the case that the $R_{\rmz}$-gates are affected by coherent control errors $\varepsilon_A$ and $\varepsilon_B$, respectively,
cf. Figures~\ref{fig:circuit_ideal_noisy} (c) and (d).
We do not consider coherent control errors of the $X$ and $\sqrt{X}$ gates since the 
elementary gate set which can be implemented on \textit{ibm\_nairobi} does not contain $R_{\rmx}$-rotations.

\begin{figure}
    \begin{center}
    \includegraphics[width=\columnwidth]{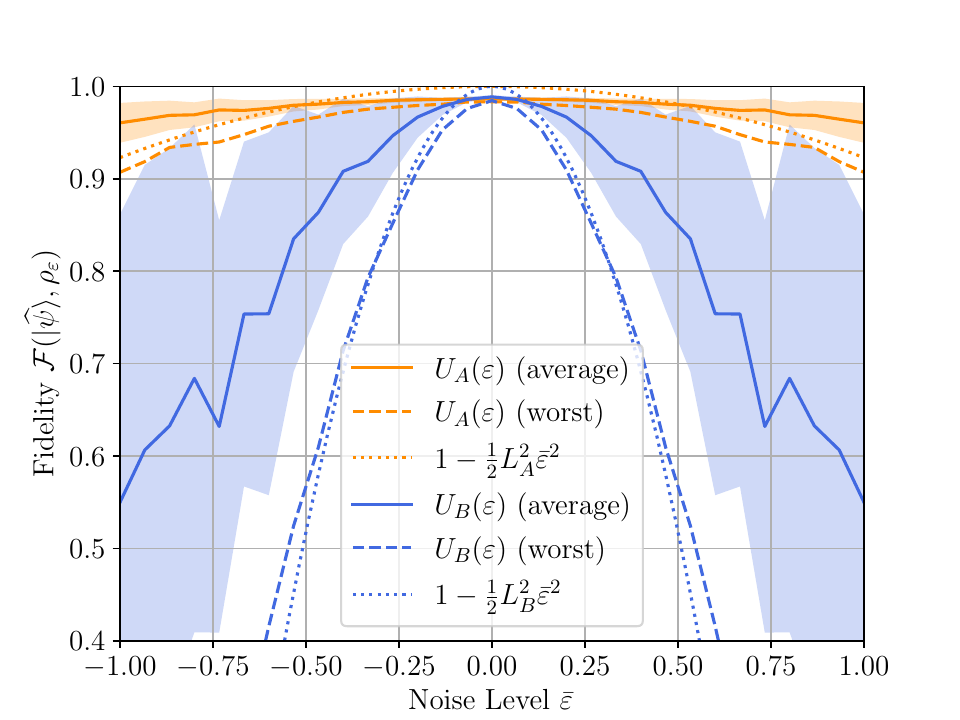}
    \end{center}
    \caption{
        Fidelities for circuits $U_A$ (orange) and $U_B$ (blue) depending on the noise level:
        average (solid), standard deviation (shaded area), and worst case (dashed) on \textit{ibm\_nairobi} over 80 noise samples for each noise level $\bar{\varepsilon}$, and
        worst-case bound from Theorem~\ref{thm:lipschitz_unitary} (dotted).
        }
    \label{fig:experiment_real}
    \end{figure}

Using Theorem~\ref{thm:lipschitz_unitary}, we can compute the Lipschitz bounds of $U_A$ and $U_B$
w.r.t.\ the errors $\varepsilon_A$ and $\varepsilon_B$ as
\begin{align}
    L_A=\frac{\pi}{8}\quad\text{and}\quad L_B=\frac{5\pi}{8},
\end{align}
respectively~\footnote{It is easy to show that the additional gates $X$ and $\sqrt{X}$ do not affect the Lipschitz bounds
w.r.t.\ the coherent control errors in $R_\rmz$.}.
Figure~\ref{fig:experiment_real} shows the resulting worst-case fidelity bounds according to~\eqref{eq:thm_lipschitz_unitary_fidelity},
where the bound for $U_A$ is substantially larger than that for $U_B$.
Thus, Theorem~\ref{thm:lipschitz_unitary} predicts that $U_A$ is more robust against coherent control errors than $U_B$.

For the implementation on \textit{ibm\_nairobi}, we run the circuits $U_A(\varepsilon_A)\ket0$ and $U_B(\varepsilon_B)\ket0$ for 16 equidistant noise levels $\bar{\varepsilon} \in [0, 1]$.
For each noise level, we draw 80 random samples of $\varepsilon_A$ and $\varepsilon_B$, respectively, from $[-\bar{\varepsilon},\bar{\varepsilon}]$.
Since the fidelity cannot be measured directly on a quantum computer, we apply a simple quantum state tomography (QST) procedure to estimate the underlying density matrix $\rho_{\varepsilon}$ and calculate the fidelity $\mathcal{F}(\ket{\hat{\psi}}, \rho_{\varepsilon})$ based on that.
Our employed QST procedure for a single qubit is outlined in Appendix~\ref{app:qst}.

Figure~\ref{fig:experiment_real} shows the resulting average and worst-case fidelities for both circuits depending on the noise level.
First, we observe that, for non-trivial noise levels, circuit $U_A$ indeed performs significantly better than circuit $U_B$, as predicted by Theorem~\ref{thm:lipschitz_unitary}.
Further, the gap not only increases for larger noise levels, but the observed worst-case fidelity is, in fact, very close to the
theoretical bound~\eqref{eq:thm_lipschitz_unitary_fidelity}.
For $U_A$, the empirical value is smaller than the theoretical lower bound, which can be explained by the inexact fidelity estimation procedure
and possibly further errors caused, e.g., by decoherence effects.
To conclude, is important to note that the improved robustness of $U_A$ over $U_B$ can only be explained by 
the different Lipschitz bounds and not, e.g., by the number of gates ($3$ vs.\ $4$) since
both circuits have almost identical fidelity in the absence of coherent control errors, i.e., for $\bar{\varepsilon}=0$.

Finally, further experimental support for the presented theoretical analysis is provided in~\cite{gokhale2020optimized}, which similarly shows that implementations with smaller Lipschitz bounds (realized via smaller/shorter pulses in the experimental setup) indeed lead to a higher fidelity.

\section{Variational quantum algorithms:
robustness via regularization}\label{sec:vqa}

Besides employing Lipschitz bounds for quantum circuit design and transpilation as outlined in the previous sections, they are also useful in VQAs. 
To be precise, we show in the following that they can be used to fine-tune VQAs towards outputting more robust quantum circuits.

VQAs are promising candidates for achieving a quantum advantage in the near-term future on NISQ devices~\cite{cerezo2021variational}.
They contain parametrized quantum circuits that are executed repeatedly and adapted via an optimization scheme.
More precisely, VQAs involve parametrized unitaries of the form
\begin{align}
    \hat{U}(\theta)=\hat{U}_1(\theta_1)\cdots \hat{U}_N(\theta_N),
\end{align}
where $\theta\in\bbr^N$ is a free parameter and $\hat{U}_i(\theta_i)=e^{-i\theta_i H_i}$ for Hamiltonians $H_i=H_i^\dagger$, $i=1,\dots,N$.
The idea is to vary $\theta$ in order to minimize a cost function $\calC:\bbr^N\to\bbr$, which is typically the measurement of some observable $\calM$ after applying $\hat{U}(\theta)$ to an initial state $\ket{\psi_0}$:
\begin{align}\label{eq:vqa_cost}
    \calC(\theta)=\braket{\psi_0|\hat{U}(\theta)^\dagger\calM \hat{U}(\theta)|\psi_0}.
\end{align}
Popular algorithms following this idea include, e.g., the quantum approximate optimization algorithm (QAOA)~\cite{farhi2014quantum}, the variational quantum eigensolver (VQE)~\cite{peruzzo2014variational}, and many more~\cite{bharti2022noisy}.
Such algorithms are expected to be more robust w.r.t.\ noise and implementable in the near future because they already produce meaningful results when using only few qubits and gates, enabling their implementation on currently available hardware.
    
Suppose now that the ideal unitaries $\hat{U}_i(\theta_i)$ are affected by coherent control errors $\varepsilon_i\in\bbr$, i.e., instead of $\hat{U}(\theta)$, we have
\begin{align}
    U(\theta,\varepsilon)=e^{-i\theta_1(1+\varepsilon_1)H_1}\cdots e^{-i\theta_N(1+\varepsilon_N)H_N}.
\end{align}
Theorem~\ref{thm:lipschitz_unitary} now implies that, for a fixed parameter $\theta\in\bbr^N$, $\sum_{i=1}^N|\theta_i|\lVert H_i\rVert_2$ is a Lipschitz bound of $\varepsilon\mapsto U(\theta,\varepsilon)\ket{\psi_0}$.
In particular, smaller values of $|\theta_i|$ imply better robustness of the VQA against coherent control errors.
This motivates solving the following regularized optimization problem in order to keep the size of $\theta$ preferably small.
\begin{align}
    \min_{\theta\in\bbr^N}\calC(\theta)+\lambda \lVert\theta\rVert_2^2.
\end{align}
Here, $\lambda>0$ is a parameter which can be used to trade off optimality and robustness against coherent control errors.
For larger values of $\lambda$, the solution of the VQA will tend more strongly towards a solution which provides a robust algorithm, whereas
smaller values of $\lambda$ will encourage better performance at the price of possibly worse robustness.
We note that, for VQAs, there exist works suggesting regularization of Hamiltonian parameters~\cite{park2020practical} and studying coherent control errors~\cite{sung2020using,ito2021universal,skolik2022robustness,rabinovich2023gate}.
However, the above-described link between regularization and coherent control errors is, to the best of our knowledge, new.

Robustness of VQAs is especially important if the algorithm should be transferrable between different quantum devices.
Adding a regularization in the VQA improves robustness and, therefore, it encourages an optimal solution which is less dependent on the specific hardware on which the VQA is trained.

In the following, we confirm the above discussion with an example, showing that, indeed, regularization of $\theta$ leads to more robust quantum circuits.
For this, we learn a simple quantum model for a regression task.
That is, given access to a data set $\mathcal{D} = \{ (x, y)\}_{i=1}^n$ consisting of $n$ real-valued data points $x \in \mathcal{X}, y \in \mathcal{Y}$
we aim to train a model $\hat{f}_\theta$ that, upon input of $x_i$, 
outputs $\hat{f}_\theta(x_i) = \hat{y}_i$, which is close to $y_i$ in the sense that the mean squared error
\begin{align}\label{eq:mean_squared_error}
    \text{MSE}(\theta)
    =
    \frac{1}{n} \sum_{i=1}^n (y_i - \hat{f}_\theta(x_i))^2
\end{align}
is sufficiently small. 
Our model is based on~\cite{schuld2021effect} and consists of a single qubit.
To be more precise, we employ the unitary
\begin{align}
    U(\theta, x) = R_{\rmx}(x) R_{\rmz}(\theta_1) R_{\rmy}(\theta_2)R_{\rmz}(\theta_3)
\end{align}
as ansatz.
Note that this ansatz consists of one fixed gate $R_{\rmx}$ to encode the input $x$ and three trainable parameters $\theta = (\theta_1, \theta_2, \theta_3)$.
For a given input $x_i$ and parameter set $\theta$, we infer the output according to
\begin{align}\label{eq:inference_qml_model}
    \hat{f}_\theta(x_i) = \braket{\psi(\theta, x_i) | \hat{Z} | \psi(\theta, x_i)},
\end{align}
with $\ket{\psi(\theta, x_i)} = U(\theta, x_i) \ket{0}$, where we estimate this expectation value based on $20,000$ circuit evaluations.
For the training, we choose $\mathcal{C}(\theta) = \text{MSE}(\theta)$ as cost function and apply the parameter-shift rule~\cite{Schuld2019ParameterShiftRule} 
to calculate the gradient $\nabla \mathcal{C}(\theta)$.
The parameters are updated from step $i \rightarrow i+1$ based on the ADAM optimizer~\cite{kingdma2014adamOptimizer}, 
i.e., $\theta^{(i+1)} = \theta^{(i)} - \eta^{(i)} \nabla \mathcal{C}(\theta^{(i)})$, where $\eta^{(i)}$ is the chosen learning rate for step $i$.
It is shown in~\cite{schuld2021effect} that the above model can represent a sine-function $f(x)=\sin(x)$.
We create a simple training set $\mathcal{X}$ with chosing $n=20$ points equidistantly from $[0, 2\pi]$, 
and a target set $\mathcal{Y} = \{ \sin(x) : x \in \mathcal{X} \}$.

When evaluating the above quantum circuit, we assume that the $R_\rmy$- and $R_\rmz$-gates are affected by coherent control errors.
In particular, in each iteration, the rotation gates for the trainable parameters $R_{\rmy}$ and $R_{\rmz}$ are perturbed
as in~\eqref{eq:qc_noise_Ui} by noise terms sampled uniformly from $[-\bar{\varepsilon},\bar{\varepsilon}]$ with $\bar{\varepsilon}=0.05$.

We consider five different choices for the regularization parameter $\lambda\in\{0, 0.01, 0.05, 0.1, 0.5\}$, and, for each choice, learn eight models with different initial weights uniformly drawn from $[-2\pi,+2\pi]$. 
The final model is chosen after $50$ iterations of the ADAM optimizer as the one with the smallest cost $\calC(\theta)$. 

\begin{figure}
    \begin{center}
    \includegraphics[width=\columnwidth]{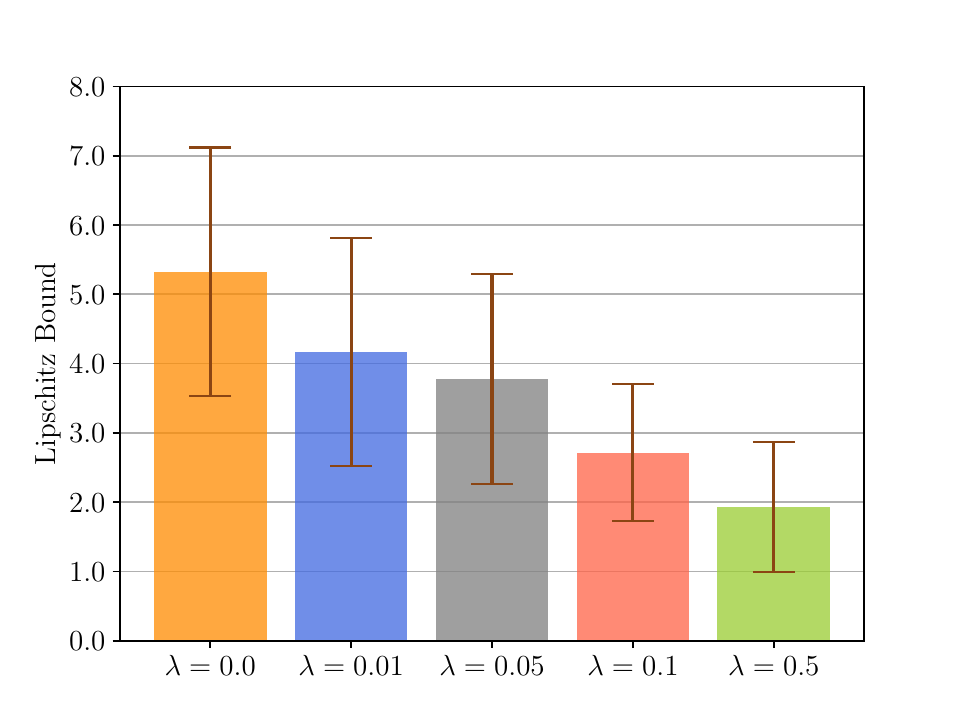}
    \end{center}
    \caption{
        Average and standard deviation of the Lipschitz bounds for the trained quantum circuits with varying regularization parameter $\lambda$.
         }
    \label{fig:experiment_vqa}
\end{figure}

The resulting Lipschitz bounds of the final circuits trained with different regularization parameters are shown in Figure~\ref{fig:experiment_vqa}.
We observe that a larger regularization parameter $\lambda$ indeed leads to a lower Lipschitz bound.
Following Theorem~\ref{thm:worst_case_fidelity}, this implies that these circuits are more robust w.r.t. coherent control errors.
Thus, we can conclude that the robustness of quantum circuits in such a setting can be successfully improved via parameter regularization.

Finally, in the context of VQAs, Lipschitz bounds have a further application beyond robustness against coherent control errors:
they can be used to certify convergence of VQA training.
It is well-known that, if the gradient of a cost function has a Lipschitz bound $L$, then gradient descent with
step-size less than $\frac{2}{L}$ provably converges~\cite{nesterov2004introductory}.
Using similar ideas,~\cite{sweke2020stochastic} study convergence of VQA optimization schemes in the presence of noise.
Our results in Section~\ref{sec:fidelity} provide Lipschitz bounds for the state $\ket{\psi}$.
The exact same Lipschitz bounds hold for a parametrized quantum circuit in the absence of coherent control errors, i.e., for
the map $\theta\mapsto\hat{U}(\theta)\ket{\psi_0}$.
These bounds then yield Lipschitz bounds for the gradient $\nabla\calC$ analogous to~\cite{sweke2020stochastic}.
Since the bounds in Theorems~\ref{thm:lipschitz_unitary} and~\ref{thm:lipschitz_pair} can be tighter than those in~\cite{sweke2020stochastic},
they can possibly allow for more degrees of freedom in the optimization algorithm and its convergence analysis,
thus reducing conservatism and improving convergence of the VQA training.

\section{Conclusion}\label{sec:conclusion}

In this paper, we presented a novel framework for analyzing robustness of quantum algorithms against coherent control errors.
We used Lipschitz bounds to derive worst-case fidelity bounds, which are explicitly computable and involve the norms of the Hamiltonians as well as their coupling.
Our framework was connected to several important problems in quantum computing:
We showed how the derived bounds can be used in threshold theorems for fault-tolerant quantum computing
and we provided a novel interpretation of parameter regularization in VQAs.
Moreover, we used our theoretical results to state a practical guideline for the design and transpilation of robust quantum algorithms,
which amounts to keeping the norms of the Hamiltonians small.
We showed that these norms quantify the robustness against coherent control errors more precisely than existing measures such as
the gate count by studying robustness of different elementary gate set implementations of the $3$-qubit QFT.
We demonstrated the practicality of our framework in multiple applications in simulation and on the \textit{ibm\_nairobi} quantum computer.
It will be an interesting next step to study robustness of further classes of quantum algorithms and gate sets against coherent control errors
in order to realize robust circuits which are more easily implementable on noisy quantum hardware.
In particular, developing circuit optimization or transpilation procedures which aim at reducing the norms of the Hamiltonians
is a promising avenue for improving robustness of quantum algorithms.
Finally, extending our theoretical analysis to include stochastic information about the errors may provide additional insights over the worst-case bounds considered in this paper.

\begin{acknowledgments}
    We would like to thank Elies Gil-Fuster and Johannes Jakob Meyer for their helpful feedback 
    on an earlier version of the manuscript.
    We acknowledge the use of IBM Quantum services for this work.
    The views expressed are those of the authors, and do not reflect the official policy or position of IBM or the IBM Quantum team.
    This work was funded by Deutsche Forschungsgemeinschaft (DFG, German Research Foundation) under Germany's Excellence Strategy - EXC 2075 - 390740016.
    We acknowledge the support by the Stuttgart Center for Simulation Science (SimTech).
\end{acknowledgments}

\appendix

\section{Proof of Theorem~\ref{thm:lipschitz_unitary}}\label{app:proof_lipschitz}

\begin{proof}
    For $i=1,\dots,N$, we have
    \begin{align*}
    \frac{\rmd}{\rmd \varepsilon_i}U_i(\varepsilon_i)=-iH_ie^{-i(1+\varepsilon_i)H_i}.
    \end{align*}
    In particular, the derivative of $U_i$ is uniformly bounded, i.e.,
    \begin{align*}
    \Big\lVert\frac{\rmd}{\rmd \varepsilon_i}U_i(\varepsilon_i)\Big\rVert_2
    \leq \lVert H_i\rVert_2
    \end{align*}
    for all $\varepsilon_i\in\bbr$.
    This implies that $\lVert H_i\rVert_2$ is a Lipschitz bound of $U_i$, i.e.,
    \begin{align}\label{eq:thm_lipschitz_unitary_proof}
    \left\lVert U_i(\varepsilon_i)-U_i(\varepsilon_i')\right\rVert_2\leq\lVert H_i\rVert_2|\varepsilon_i-\varepsilon_i'|
    \end{align}
    for all $\varepsilon_i,\varepsilon_i'\in\bbr$.
    Now, we infer
    \begin{align*}
    &\lVert\ket{\psi(\varepsilon)}-\ket{\psi(\varepsilon')}\rVert_2\\
    =&\Big\lVert \Big(\prod_{i=1}^NU_i(\varepsilon_i)-\prod_{i=1}^NU_i(\varepsilon_i')\Big)\ket{\psi_0}\Big\rVert_2\\
    \leq&\Big\lVert \prod_{i=1}^NU_i(\varepsilon_i)-\prod_{i=1}^NU_i(\varepsilon_i')\Big\rVert_2\\
    =&\Big\lVert U_1(\varepsilon_1)\Big(
    \prod_{i=2}^{N}U_i(\varepsilon_i)-\prod_{i=2}^{N}U_i(\varepsilon_i')\Big)\\
    &+(U_1(\varepsilon_1)-U_1(\varepsilon_1'))\prod_{i=2}^{N}U_i(\varepsilon_i')\Big\rVert_2\\
    \leq&\lVert U_1(\varepsilon_1)-U_1(\varepsilon_1')\rVert_2+
    \Big\lVert\prod_{i=2}^{N}U_i(\varepsilon_i)-\prod_{i=2}^{N}U_i(\varepsilon_i')\Big\rVert_2,
    \end{align*}
    where we use that $\ket{\psi_0}$ has unit norm and the $U_i$'s are unitary, 
    as well as the triangle inequality and sub-multiplicativity of matrix norms.
    Proceeding inductively, this implies
    \begin{align}\label{eq:thm_lipschitz_unitary_proof2}
    &\lVert\ket{\psi(\varepsilon)}-\ket{\psi(\varepsilon')}\rVert_2
    \leq\sum_{i=1}^N\lVert U_i(\varepsilon_i)-U_i(\varepsilon_i')\rVert_2.
    \end{align}
    Combining~\eqref{eq:thm_lipschitz_unitary_proof} and~\eqref{eq:thm_lipschitz_unitary_proof2}, we obtain
    \begin{align}\label{eq:thm_lipschitz_unitary_proof3}
        \lVert \ket{\psi(\varepsilon)}-\ket{\psi(\varepsilon')}\rVert_2&\leq\sum_{i=1}^N\lVert H_i\rVert_2| \varepsilon_i-\varepsilon_i'|\\\nonumber
        &\leq\sum_{i=1}^N\lVert H_i\rVert_2\lVert \varepsilon-\varepsilon'\rVert_{\infty},
        \end{align}
        which implies~\eqref{eq:thm_lipschitz_unitary} and the Lipschitz bound $\sum_{i=1}^N\lVert H_i\rVert_2$.
    Further, Theorem~\ref{thm:worst_case_fidelity} implies~\eqref{eq:thm_lipschitz_unitary_fidelity}, which thus concludes the proof.
    \end{proof}

\section{Scalability of Theorem~\ref{thm:lipschitz_unitary} for sparse circuits}\label{app:scalability}

In the following, we formulate the simple but practically important observation that the Lipschitz
bound~\eqref{eq:thm_lipschitz_unitary} only depends on the non-trivial gate components acting on the individual qubits.

\begin{proposition}\label{prop:individual_qubits}
    Suppose $\hat{U}_i$ only acts on a subset of all qubits, i.e.,
    \begin{align*}
        \hat{U}_i=e^{-iH_i}=I_{\otimes j_1}\otimes e^{-i\bar{H}_{i}}\otimes I_{\otimes j_2}
    \end{align*}
    for some $j_1$, $j_2$.
    Then, $\lVert H_i\rVert_2=\lVert \bar{H}_{i}\rVert_2$.
\end{proposition}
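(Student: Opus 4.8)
The plan is to reduce everything to the elementary fact that the spectral norm is multiplicative under tensor products with identity, namely $\lVert I_{a}\otimes B\otimes I_{c}\rVert_2=\lVert B\rVert_2$. First I would write $H_i=I_{\otimes j_1}\otimes\bar H_i\otimes I_{\otimes j_2}$, which is consistent with the stated form of $\hat U_i$ because the matrix exponential of a Kronecker sum factorizes: since the three tensor factors commute pairwise, $e^{-iH_i}=e^{-i I_{\otimes j_1}}\otimes e^{-i\bar H_i}\otimes e^{-i I_{\otimes j_2}}$ up to absorbing the trivial identity-generated phases, and one may choose the generator $H_i$ of $\hat U_i$ to have exactly this tensor structure (any global-phase ambiguity, as in Remark~\ref{rk:global_phase}, does not affect the argument). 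Then the claim $\lVert H_i\rVert_2=\lVert\bar H_i\rVert_2$ is immediate.

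The cleanest route to $\lVert I_{\otimes j_1}\otimes\bar H_i\otimes I_{\otimes j_2}\rVert_2=\lVert\bar H_i\rVert_2$ is via singular values: if $\bar H_i$ has singular value decomposition $\bar H_i=V\Sigma W^\dagger$, then
\begin{align*}
    I_{\otimes j_1}\otimes\bar H_i\otimes I_{\otimes j_2}
    =(I\otimes V\otimes I)(I\otimes\Sigma\otimes I)(I\otimes W^\dagger\otimes I),
\end{align*}
the outer factors are unitary, and $I\otimes\Sigma\otimes I$ is diagonal with entries exactly the singular values of $\bar H_i$ (each repeated $2^{j_1}\cdot 2^{j_2}$ times). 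Hence the largest singular value is unchanged, i.e. $\lVert I_{\otimes j_1}\otimes\bar H_i\otimes I_{\otimes j_2}\rVert_2=\lVert\Sigma\rVert_2=\lVert\bar H_i\rVert_2$. Alternatively, since $H_i$ is Hermitian one can argue directly with eigenvalues: the spectrum of a Kronecker sum is the set of sums of eigenvalues of the summands, and the identity blocks contribute only zeros, so $\lambda_{\max}(H_i^\dagger H_i)=\lambda_{\max}(\bar H_i^\dagger\bar H_i)$.

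There is essentially no hard part here — the statement is, as the authors say, "simple but practically important." The only thing requiring a sentence of care is justifying that one may take the generator $H_i$ of the given $\hat U_i$ to possess the displayed tensor-product form; this follows because $\bar H_i$ can be chosen as a Hermitian generator of the nontrivial block $e^{-i\bar H_i}$ and then $I_{\otimes j_1}\otimes\bar H_i\otimes I_{\otimes j_2}$ is a Hermitian matrix whose exponential is $\hat U_i$. The practical payoff, worth stating after the proof, is that in Theorem~\ref{thm:lipschitz_unitary} the quantity $\sum_i\lVert H_i\rVert_2$ can be evaluated using only the small matrices $\bar H_i$ acting on the few qubits each gate touches, rather than the full $2^n\times 2^n$ Hamiltonians, which is what makes the Lipschitz bound computable for large circuits.
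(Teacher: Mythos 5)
Your proposal is correct and follows essentially the same route as the paper's proof: identify the generator as $H_i=I_{\otimes j_1}\otimes\bar H_i\otimes I_{\otimes j_2}$ via the factorization of the matrix exponential, then conclude $\lVert H_i\rVert_2=\lVert\bar H_i\rVert_2$ from the invariance of the spectral norm under tensoring with identities. The only difference is that you spell out the last step via the SVD (and flag the global-phase choice of generator), where the paper simply asserts it; this is a harmless elaboration, not a different argument.
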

\begin{proof}
    Note that
\begin{align*}
    e^{-iH_i}=&\,I_{\otimes j_1}\otimes e^{-i\bar{H}_i}\otimes I_{\otimes j_2}\\\nonumber
    =&\,e^{I_{\otimes j_1}\otimes (-i\bar{H}_i)\otimes I_{\otimes j_2}}.
\end{align*}
This implies
\begin{align*}
\lVert H_i\rVert_2=
\lVert I_{\otimes j_1}\otimes \bar{H}_{i}\otimes I_{\otimes j_2}
\rVert_2=\lVert \bar{H}_{i}\rVert_2.
\end{align*}
\end{proof}

In the common scenario that the quantum algorithm only consists of single-qubit and $2$-qubit gates,
Proposition~\ref{prop:individual_qubits} implies that $\lVert H_i\rVert_2$ can be easily
determined by computing the norm of a $2\times 2$ or $4\times 4$ matrix.
In particular, the complexity of computing the bounds in Theorem~\ref{thm:lipschitz_unitary} for the full circuit grows only linearly with the
gate count and, therefore, the approach is easily scalable to large circuits.

\section{Tightness of the worst-case fidelity bound~\eqref{eq:thm_lipschitz_unitary_fidelity}}\label{app:tightness}
It is not hard to show that, for the case of a single gate ($N=1$) and for $\lVert H\rVert_2\bar{\varepsilon}\leq\frac{\pi}{2}$, the worst-case fidelity is equal to $|\cos(\lVert H\rVert_2\bar{\varepsilon})|$.
This means that there exist an initial state $\ket{\psi_0}$ and a noise realization $\varepsilon\in\bbr$ with $|\varepsilon|\leq\bar{\varepsilon}$ such that
\begin{align}\label{eq:tightness_cos}
    |\braket{\psi(\varepsilon)|\hat{\psi}}|=|\cos(\lVert H\rVert_2\bar{\varepsilon})|.
\end{align}
Hence, for $N=1$, the fidelity bound $1-\frac{1}{2}\lVert H\rVert_2^2\bar{\varepsilon}^2$ in~\eqref{eq:thm_lipschitz_unitary_fidelity}
is the second-order Taylor approximation of the exact worst-case fidelity.
This shows that the bound is not too conservative as long as the product $\lVert H\rVert_2\bar{\varepsilon}$ is small.

On the other hand, for multiple gates, i.e., $N>1$, finding an explicit expression for the worst-case fidelity is more involved.
As an alternative to the bound~\eqref{eq:thm_lipschitz_unitary_fidelity},
one can use~\eqref{eq:tightness_cos} to derive a lower bound as follows:
Recall that the fidelity $|\braket{\psi_1|\psi_2}|$ induces a metric 
$D(\ket{\psi_1},\ket{\psi_2})=\sqrt{1-|\braket{\psi_1|\psi_2}|^2}$~\cite{nielsen2011quantum}.
Thus, we can quantify the distance between the ideal state $\ket{\hat{\psi}}$ and the noisy state $\ket{\psi(\varepsilon)}$ by 
repeatedly applying the triangle inequality to this metric and using~\eqref{eq:tightness_cos}, i.e.,
\begin{align*}
    D(\ket{\psi(\varepsilon)},\ket{\hat{\psi}})\leq &\sum_{i=1}^N\sqrt{1-|\cos(\lVert H_i\rVert_2\bar{\varepsilon})|^2}\\
    &=\sum_{i=1}^N|\sin(\lVert H_i\rVert_2\bar{\varepsilon})|.
\end{align*}
This can be translated into the fidelity bound
\begin{align}\label{eq:fidelity_bound_sin}
    |\braket{\psi(\varepsilon)|\hat{\psi}}|
    \geq\sqrt{1-\left(\sum_{i=1}^N|\sin(\lVert H_i\rVert_2\bar{\varepsilon})|\right)^2},
\end{align}
assuming that the expression under the square root is positive.
Let us compare the bounds~\eqref{eq:thm_lipschitz_unitary_fidelity} and~\eqref{eq:fidelity_bound_sin}.
If the noise level $\bar{\varepsilon}$ is small, $\sin(\lVert H_i\rVert_2\bar{\varepsilon})$ in~\eqref{eq:fidelity_bound_sin} can be replaced by $\lVert H_i\rVert_2\bar{\varepsilon}$.
Defining $c=\sum_{i=1}^N\lVert H_i\rVert_2\bar{\varepsilon}$, the right-hand side of~\eqref{eq:fidelity_bound_sin} then becomes
$\sqrt{1-c^2}$.
    On the other hand, the right-hand side of~\eqref{eq:thm_lipschitz_unitary_fidelity} can be written as
    $1-\frac{c^2}{2}$.
    Since $\sqrt{1-c^2}<1-\frac{c^2}{2}$ for any $0<c<1$, we have thus proven that (for small noise levels) the 
    bound~\eqref{eq:thm_lipschitz_unitary_fidelity}
    is tighter than~\eqref{eq:fidelity_bound_sin},
i.e., it guarantees a larger worst-case fidelity.
Therefore, throughout the paper, we focus on the bound~\eqref{eq:thm_lipschitz_unitary_fidelity} derived in Theorem~\ref{thm:lipschitz_unitary}.

\section{Proof of Theorem~\ref{thm:lipschitz_pair}}\label{app:proof_pair}

\begin{proof}
    We only prove~\eqref{eq:thm_lipschitz_pair_even} and note that~\eqref{eq:thm_lipschitz_pair_odd} can be proven with trivial modifications.
    Note that
\begin{align}\label{eq:lipschitz_structured1}
&\frac{\rmd\ket{\psi(\varepsilon)}}{\rmd \varepsilon}\\\nonumber
=&-i\Big[ H_1\prod_{i=1}^NU_i(\varepsilon_i)\ket{\psi_0},\>U_1(\varepsilon_1)H_2\prod_{i=2}^NU_i(\varepsilon_i)\ket{\psi_0},\\\nonumber
&\qquad\dots,\prod_{i=1}^{N-1}U_i(\varepsilon_i)H_NU_N(\varepsilon_N)\ket{\psi_0}\Big].
\end{align}
Since $H_i$ and $U_i(\varepsilon_i)$ commute, we can rewrite the first two block-entries on the right-hand side of~\eqref{eq:lipschitz_structured1} as
\begin{align}\label{eq:lipschitz_structured_proof}
&\Big[ H_1\prod_{i=1}^NU_i(\varepsilon_i)\ket{\psi_0},U_1(\varepsilon_1)H_2\prod_{i=2}^NU_i(\varepsilon_i)\ket{\psi_0}\Big]\\\nonumber
&=U_1(\varepsilon_1)\begin{pmatrix}H_1&H_2\end{pmatrix}\Big(I_2\otimes \big(\prod_{i=2}^NU_i(\varepsilon_i)\big)\Big)\ket{\psi_0}.
\end{align}
Using that $\ket{\psi_0}$ is a unit vector and the matrices $U_i(\varepsilon_i)$ are unitary, the norm of~\eqref{eq:lipschitz_structured_proof}
is upper bounded by $\lVert\begin{pmatrix}H_1&H_2\end{pmatrix}\rVert_2$.
Thus similar to the proof of Theorem~\ref{thm:lipschitz_unitary}, we infer
that, for any $\varepsilon\in\bbr^N$,
\begin{align*}
    &\lVert\ket{\psi(\varepsilon)}-\ket{\hat{\psi}}\rVert_2\\
    \leq&\sum_{i=1}^{\frac{N}{2}}
    \lVert \begin{pmatrix}H_{2i-1}&H_{2i}\end{pmatrix}\rVert_2
    \Big\lVert \begin{pmatrix}\varepsilon_{2i-1}\\\varepsilon_{2i}\end{pmatrix}\Big\rVert_2\\
    \leq&\sqrt{2}\sum_{i=1}^{\frac{N}{2}}
    \lVert \begin{pmatrix}H_{2i-1}&H_{2i}\end{pmatrix}\rVert_2
    \lVert \varepsilon\rVert_{\infty},
\end{align*}
where we use that $\lVert a\rVert_2\leq\sqrt{2}\lVert a\rVert_{\infty}$ if $a\in\bbr^2$.
\end{proof}

\section{Circuits for the $3$-qubit QFT}\label{app:qft}
In the following, we present the transpiled quantum circuits for the $3$-qubit QFT for our considered elementary gate sets from Section~\ref{sec:robustness_elementary}.
All circuits were transpiled using the BQSKit transpiler~\cite{bqskit_software}.
Figure~\ref{fig:circuit_qft} shows the original, un-transpiled quantum circuit for the $3$-qubit QFT from~\cite{qiskit_software}.
Further, Figures~\ref{fig:circuit_ibm}--\ref{fig:circuit_honeywell} show the transpiled circuits with native gate sets A--E:
gate set A in Figure~\ref{fig:circuit_ibm}, gate set B in Figure~\ref{fig:circuit_rigetti},
gate set C in Figure~\ref{fig:circuit_ibm_old}, gate set D in Figure~\ref{fig:circuit_google},
and gate set E in Figure~\ref{fig:circuit_honeywell}.

\begin{figure}[hbt!]
    \begin{center}
    \includegraphics[width=\columnwidth]{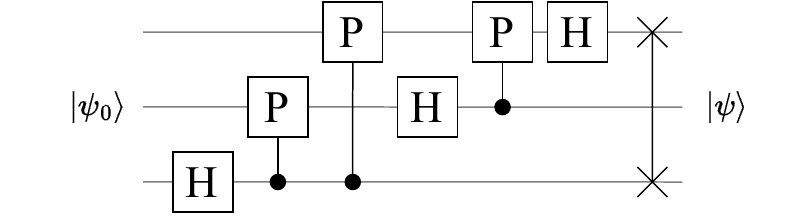}
    \end{center}
    \caption{
        The un-transpiled quantum circuit for the $3$-qubit QFT, which is taken from Qiskit~\cite{qiskit_software}.
        In total, the circuit consists of three Hadamard, three Controlled Phase and one SWAP gate.
        }
    \label{fig:circuit_qft}
\end{figure}

\begin{figure}[hbt!]
    \begin{center}
    \includegraphics[width=\columnwidth]{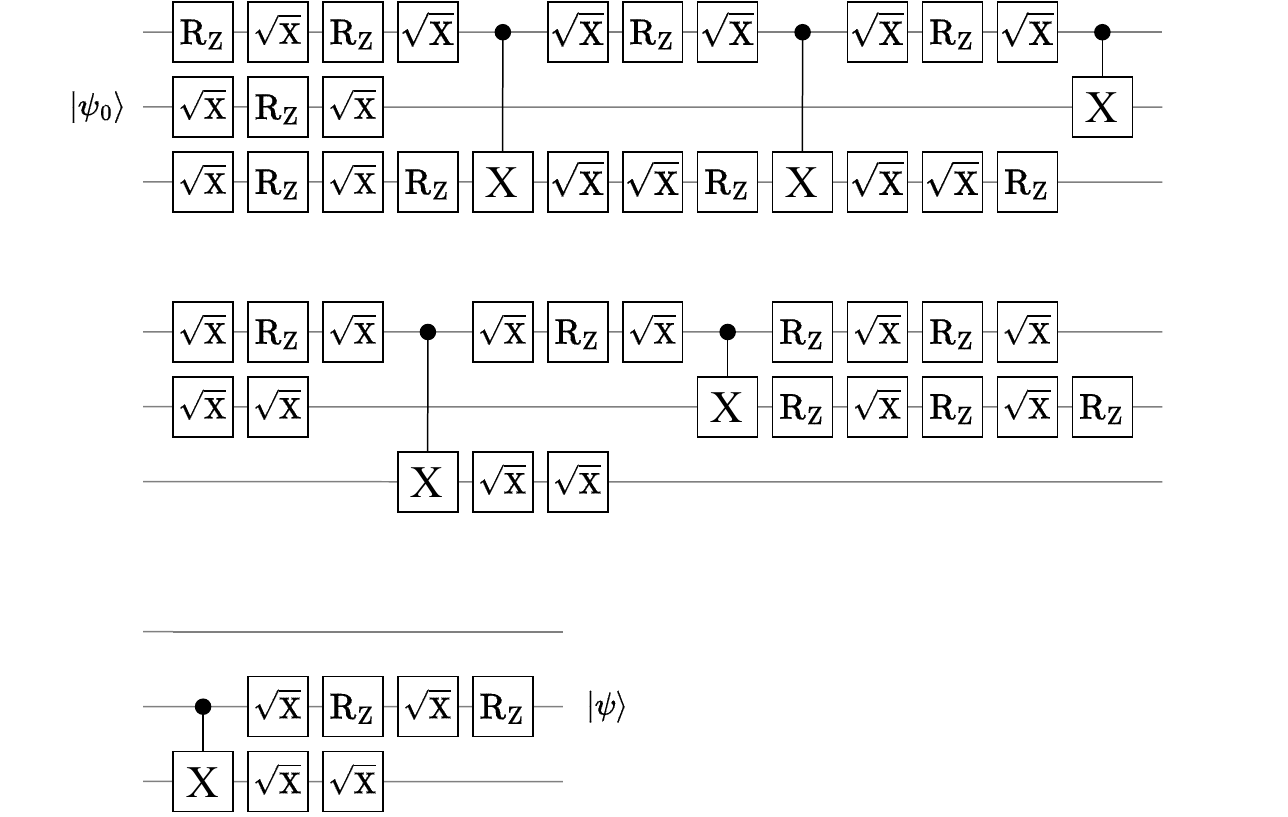}
    \end{center}
    \caption{
        The transpiled quantum circuit for the $3$-qubit QFT with gate set A.
        In total, the circuit consists of $18$ $R_{\rmZ}$, $30$ $\sqrt{X}$ and $6$ $CX$ gates and has a Lipschitz bound of $L_{QFT}^A = 117.95$.
        }
    \label{fig:circuit_ibm}
\end{figure}

\begin{figure}[hbt!]
    \begin{center}
    \includegraphics[width=\columnwidth]{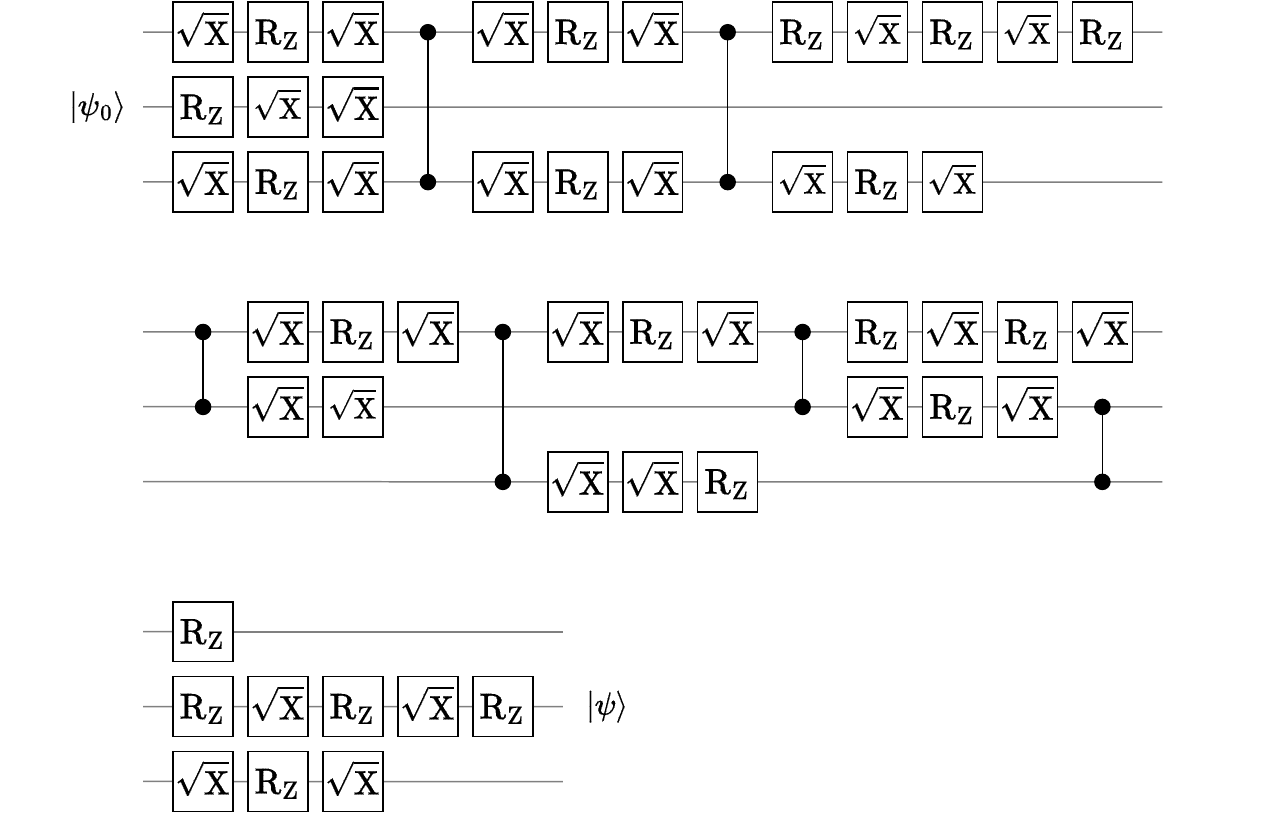}
    \end{center}
    \caption{
        The transpiled quantum circuit for the $3$-qubit QFT with gate set B.
        In total, the circuit consists of $20$ $R_{\rmZ}$, $30$ $\sqrt{X}$ and $6$ $CZ$ gates and has a Lipschitz bound of $L_{QFT}^B = 106.79$.
        }
    \label{fig:circuit_rigetti}
\end{figure}

\begin{figure}[hbt!]
    \begin{center}
    \includegraphics[width=\columnwidth]{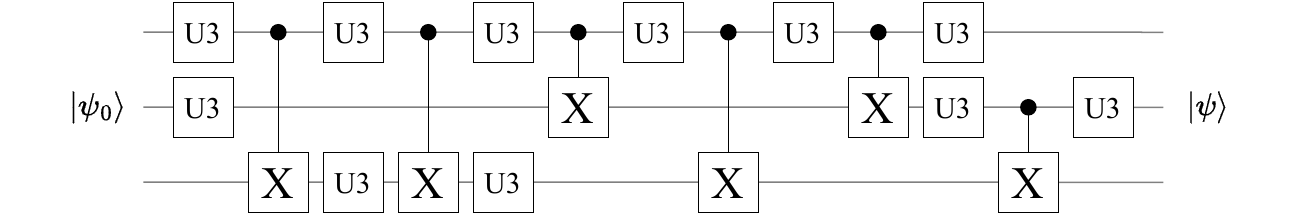}
    \end{center}
    \caption{
        The transpiled quantum circuit for the $3$-qubit QFT with gate set C.
        In total, the circuit consists of $11$ $U_3$ and $6$ $CX$ gates and has a Lipschitz bound of $L_{QFT}^C = 45.26$.
        }
    \label{fig:circuit_ibm_old}
\end{figure}

\begin{figure}[hbt!]
    \begin{center}
    \includegraphics[width=\columnwidth]{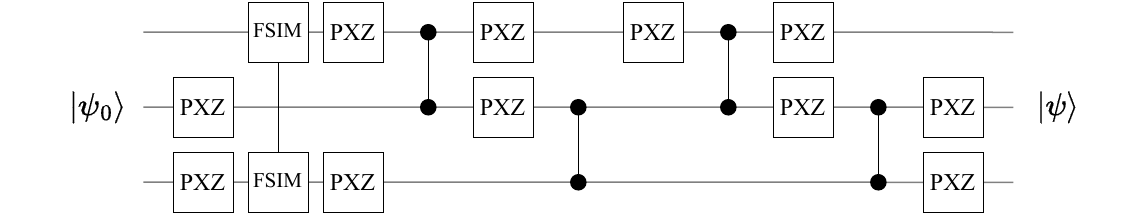}
    \end{center}
    \caption{
        The transpiled quantum circuit for the $3$-qubit QFT with gate set D.
        In total, the circuit consists of $11$ PhasedXZ and $4$ $CZ$ and one FSIM gate and has a Lipschitz bound of $L_{QFT}^D = 36.95$.
        }
    \label{fig:circuit_google}
\end{figure}

\begin{figure}[hbt!]
    \begin{center}
    \includegraphics[width=\columnwidth]{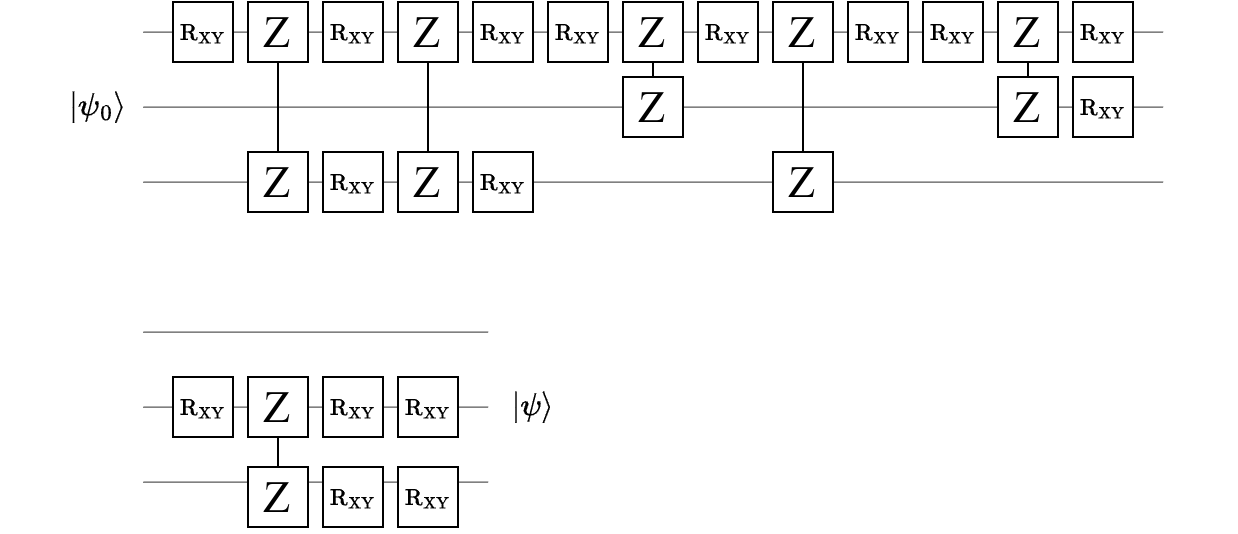}
    \end{center}
    \caption{
        The transpiled quantum circuit for the $3$-qubit QFT with gate set E.
        In total, the circuit consists of $16$ $R_{xy}$ and $6$ $U_{zz}$ gates and has a Lipschitz bound of $L_{QFT}^E = 29.42$.
        }
    \label{fig:circuit_honeywell}
\end{figure}

\section{Single qubit quantum state tomography}\label{app:qst}

In Section~\ref{sec:validation}, we compare an analytically known quantum state
$\ket{\hat{\psi}}$ with an unknown state $\ket{\psi(\varepsilon)}$, which is the output of the execution
of a quantum circuit on the real quantum computer \textit{ibm\_nairobi}.
Since the amplitudes of the quantum state cannot be measured directly, 
we need to use some technique to reconstruct the state based only on measurable expectation values.
This is the goal of a quantum state tomography (QST) procedure~\cite{schmied2016qst}.

In the considered setup, we are dealing with a single qubit state.
Therefore, we can write the corresponding density matrix as
\begin{align}\label{eq:qst_reconstruction}
    \rho = \frac{1}{2} \Bigl( I + r_x \cdot X + r_y \cdot Y + r_z \cdot Z \Bigr),
\end{align}
where the coefficients $r_x, r_y, r_z \in \mathbb{R}$ are the expectation values of the Pauli matrices, i.e.,
\begin{align}\label{eq:pauli_expectation_values}
    r_x &= \text{tr}(X \rho),
    \\
    r_y &= \text{tr}(Y \rho),
    \\
    r_z &= \text{tr}(Z \rho).
\end{align}
We can then obtain an estimate of these expectation values when preparing the quantum state $\rho$ several times
and measuring the projections onto the corresponding axes~\cite{schmied2016qst}.
Given the resulting density matrix $\rho$ and the analytical expression for the pure state $\ket{\psi}$,
we can calculate the fidelity according to
\begin{align}\label{eq:fidelity_density_matrix}
    \mathcal{F}(\ket{\psi}, \rho) 
    =
    \sqrt{| \braket{\psi | \rho | \psi} |}.
\end{align}

For our experiments, given a noise level $\bar{\varepsilon}$,
we draw a noise sample $\varepsilon \in [ - \bar{\varepsilon}, + \bar{\varepsilon} ]$,
prepare the state $\ket{\psi(\varepsilon)}$ $20,000$ times for each of the three Pauli expectation values,
reconstruct $\rho_{\varepsilon}$ according to~(\ref{eq:qst_reconstruction}),
and calculate the fidelity via~(\ref{eq:fidelity_density_matrix}).
We run this procedure for $16$ different noise levels, each featuring $80$ different noise samples, in order to collect the statistics for Figure~\ref{fig:experiment_real}.
The overall procedure is performed for each of the two quantum circuits $U_A$ and $U_B$.
Thus, we run $2 \cdot 16 \cdot 80 \cdot 3 \cdot 20,000 = 153,600,000$ quantum circuits on \textit{ibm\_nairobi} in total.

\bibliography{Literature}

\end{document}